\newcommand{\keywords}[1]{\par\addvspace\baselineskip
\noindent\keywordname\enspace\ignorespaces#1}
\begin{document}

\mainmatter  

\title{Decomposing Truthful and Competitive Online Double Auctions}

\titlerunning{Truthful and Competitive Online Double Auctions}

%
%

\author{Dengji Zhao\inst{1} \and Dongmo Zhang\inst{2} \and Laurent Perrussel\inst{3}
}
\authorrunning{Zhao \textit{et al}.}


\institute{
	Graduate School of ISEE,
	Kyushu University, Japan\\
	djzhao@inf.kyushu-u.ac.jp
\and
	ISL,
	University of Western Sydney, Australia\\
	d.zhang@uws.edu.au
\and
   	IRIT, University of Toulouse, France\\
	laurent.perrussel@univ-tlse1.fr
}

%
%

\toctitle{Lecture Notes in Computer Science}
\maketitle

\begin{abstract}
In this paper, we study online double auctions, where multiple sellers and multiple buyers arrive and depart dynamically to exchange one commodity. We show that there is no deterministic online double auction that is truthful and competitive for maximising social welfare in an adversarial model. However, given the prior information that sellers are patient and the demand is not more than the supply, a deterministic and truthful greedy mechanism is actually $2$-competitive, i.e. it guarantees that the social welfare of its allocation is at least half of the optimal one achievable offline. Moreover, if the number of incoming buyers is predictable, we demonstrate that an online double auction can be reduced to an online one-sided auction, and the truthfulness and competitiveness of the reduced online double auction follow that of the online one-sided auction. Notably, by using the reduction, we find a truthful mechanism that is almost $1$-competitive, when buyers arrive randomly. Finally, we argue that these mechanisms also have a promising applicability in more general settings without assuming that sellers are patient, by decomposing a market into multiple sub-markets. 
\keywords{Online auctions, double auctions, online bipartite matching, mechanism design}
\end{abstract}

\section{Introduction}
Double auction markets (aka exchanges) allow multiple sellers and multiple buyers to trade a commodity simultaneously, e.g. the New York Stock Exchange. Each trader (seller or buyer) has a private valuation of the commodity. In a double auction market, sellers submit asks (sell orders) 
to sell a commodity and buyers submit bids (buy orders) 
to buy the commodity. We assume that each trader supplies/demands one unit of the commodity. Given asks and bids submitted by traders, the market owner (aka auctioneer) matches them using certain market clearing polices in order to make transactions. Traditionally, double auctions have been well studied in static settings, where all traders are known before the auctioneer makes any decision~\cite{Myerson_1983,McAfee_1990,Wurman_1998}. However, in most modern double auction markets, traders arrive and depart at different times. We call these markets \textit{online double auctions}. The main challenge for the auctioneer in an online double auction is to make decisions without knowing the traders/orders not yet arrived. The decisions involve an online bipartite matching (i.e. allocation) between sellers and buyers and a payment calculation.

Following the previous work in online auction design~\cite{blum_online_2006,Bredin_2007,Parkes_OnlineMD_2007}, this article makes an incremental step in this field. We focus on two important criteria, \textit{truthfulness} and \textit{efficiency}, for online double auction design. We say a mechanism is truthful if for each trader, reporting his truthful type, including valuation, arrival and departure time, is his dominant strategy, and it is efficient if the social welfare of its allocation is maximised among all feasible allocations.

\subsection{Contributions}
We show that there is no deterministic and truthful online double auction that is also competitive for efficiency in an adversarial model. Then we further study the environment where sellers are relatively static compared with buyers. Within this environment, two situations are examined: 1) the demand (the number of buyers) is not more than the supply (the number of sellers), but not known exactly, 2) the demand is predictable but not necessarily not more than the supply. We show that, in the first situation, a deterministic and truthful greedy-mechanism is actually $2$-competitive. In the second situation, we propose a framework to reduce a truthful online double auction to a truthful online one-sided auction, and demonstrate that the competitiveness of the reduced online double auction follows that of the online one-sided auction. Especially, by using the reduction, we achieve a truthful auction that is almost $1$-competitive, i.e. the social welfare of the auction's allocation is nearly optimised, when buyers arrive randomly. Finally, we show that the assumption made on sellers' arrival and departure can be relaxed by, for example, decomposing a market into multiple disjoint sub-markets and applying the proposed mechanisms in each sub-market.

\subsection{Related Work}
During last decade, there have been substantial researches on mechanism design in different dynamic environments, termed \textit{online mechanism design} (see \cite{Parkes_OnlineMD_2007} for a survey). Most of the previous work has focused on one-sided dynamic markets where either the supply or the demand is dynamic, e.g. Ad auctions~\cite{Mehta:2007}. More importantly, the auctioneer (in most cases, the seller) in one-sided dynamic markets does not provide valuations (or reserve prices) to the commodities exchanged and is not considered to strategically manipulate the auction. However, in online double auction markets, both the supply and the demand are dynamic and playing strategically, and the auctioneer has no control of any of them. 

To tackle the complexity of online double auction design, existing research has utilised certain accessible prior knowledge of the dynamics to design desirable online auctions~\cite{blum_online_2006,Bredin_2007}. 
For instance, given the assumption that the valuations of traders are in a range $[p_{min}, p_{max}]$, Blum \textit{et al.}~\cite{blum_online_2006} proposed a $r$-competitive truthful online double auction in an adversarial setting for maximising social welfare, where $r$ is the fixed point of $r = \frac{1}{2}\ln\frac{p_{max}-p_{min}}{(r-1)p_{min}}$. Besides that, they also considered many other criteria. Moreover, assumed that traders' available/active time period in the auction is no more than some constant $K$, Bredin {\it et al.}~\cite{Bredin_2007} designed a framework to construct truthful online double auctions from truthful static double auctions, and demonstrated the performance (for maximising social welfare) of the auctions given by the framework in probabilistic settings through experiments. 

This paper is organised as follows. In Section \ref{sect_pre}, we briefly introduce our market model and related concepts. In Section \ref{sect_neg}, we show the impossibility result. Then we propose a deterministic and truthful mechanism that is $2$-competitive in Section \ref{sect_case1} and a framework to reduce a truthful online double auction to a truthful online one-sided auction in Section \ref{sect_case2} for two restricted environments respectively. We conclude in Section \ref{sect_con} with some discussions.

\section{Preliminaries and Notations}
\label{sect_pre}
We consider an online/dynamic double auction market, in which a set $B$ of {\bf buyers} and a set $S$ of {\bf sellers} trade one commodity. Buyers and sellers are {\bf traders}. We will refer to a seller as she and a buyer or trader as he. Let $T=B\cup S$ and assume that traders are independent and no trader can be both buyer and seller at the same time, i.e. $B\cap S = \emptyset$. Each trader supplies or demands a single unit of the commodity during a specific time period called the active time of the trader. Since each trader might have different active times, they might come and leave the market at different times, which causes the dynamics of the market. Given the dynamics of the market, the \textbf{auctioneer} (market owner) is challenged by making decisions without knowing those traders not yet arrived.

Each trader $i\in T$ has a privately observed {\bf type} $\theta_i=(v_i, a_i,d_i)$, where $v_i, a_i, d_i \in \mathbb{R}^+$, $v_i$ is $i$'s valuation of a single unit of the commodity, and $a_i$ and $d_i$ are the starting point and the ending point of $i$'s \textbf{active time}, i.e. the arrival and departure time of $i$. 

Due to the revelation principle~\cite{Myerson:2008}, we focus on mechanisms that require traders to directly report their types. 
However, traders do not necessarily report their true types but no early-arrival and no late-departure misreports are permitted, i.e. given trader $i$'s type $\theta_i=(v_i, a_i,d_i)$, his report $\theta_i^\prime=(v_i^\prime, a_i^\prime, d_i^\prime)$ satisfies $a_{i}^\prime \leq d_{i}^\prime$ and $[a_{i}^\prime, d_{i}^\prime] \subseteq [a_i, d_i]$. The intuition behind this constraint is that traders do not recognise the market before their arrival and they do not get utility for any trade happened after their true departure. We say a seller's report (called \textbf{ask}) $\theta_i=(v_i,a_i,d_i)$ and a buyer's report (called \textbf{bid}) $\theta_j=(v_j,a_j,d_j)$ are \textbf{matchable} if and only if $v_i\leq v_j$ and $[a_i,d_i]\cap [a_j, d_j]\not=\emptyset$. 
That is, a match/transaction should not decrease social welfare.

Let $\theta = (\theta_i)_{i\in T}$ denote a complete type profile, and $\theta^A = (\theta_i)_{i\in S}$ and $\theta^B = (\theta_i)_{i\in B}$ be the complete ask and bid profile respectively. Let $\theta_{-i}$ be the type profile of all traders except for $i$.
\begin{definition}
 An \textbf{online double auction (ODA)} $\mathcal{M} = (\pi, x)$ consists of an \textbf{allocation policy} $\pi=(\pi_i)_{i\in T}$ and a \textbf{payment policy} $x = (x_i)_{i\in T}$, where $\pi_i(\theta) \in \{0,1\}$ indicates whether or not trader $i$ trades successfully during his reported active time ($1$ means success), and $x_i(\theta)\in \mathbb{R}_+$ determines the payment paid (received) by buyer (seller) $i$ during his (her) reported active time.
\end{definition} 

An allocation $\pi$ is \textbf{feasible} if $\sum_{i\in B}\pi_i(\theta) = \sum_{i\in S}\pi_i(\theta)$ for all $B$, $S$ and $\theta$. An ODA $\mathcal{M} = (\pi, x)$  is feasible if $\pi$ is feasible. Feasibility guarantees that the auctioneer never takes short or long position in the commodity exchanged in the market. Only feasible ODAs will be discussed in this article.  

Given trader $i$ of type $\theta_i=(v_i, a_i, d_i)$, report profile $\theta^\prime$ and ODA $\mathcal{M} = (\pi, x)$, let $v(\theta_i) = v_i$, and the \textbf{utility} of $i$ is defined as
\begin{equation*}
u(\theta_i,\theta^\prime,(\pi,x)) = \left\{
   \begin{array}{ll}
    v(\theta_i)\pi_i(\theta^\prime)-x_i(\theta^\prime), & \text{if $i \in B$.}\\
    x_i(\theta^\prime) - v(\theta_i)\pi_i(\theta^\prime), & \text{if $i \in S$.}\\
   \end{array}
 \right.
\end{equation*}

\begin{definition}
 An ODA $\mathcal{M} = (\pi, x)$ is \textbf{truthful} (aka \textbf{incentive-compatible}) if 
 $u(\theta_i,(\theta_i,\theta_{-i}^\prime),(\pi,x)) \geq u(\theta_i,\theta^\prime,(\pi,x))$ for all $i$, all permitted misreports $\theta^\prime$ of $\theta$, all type profile $\theta$. 
\end{definition}
\begin{definition}
An ODA $\mathcal{M} = (\pi, x)$ is \textbf{efficient} if $\mathcal{M}$ maximises the  \textbf{social welfare}
\begin{equation}
W(\pi(\theta)) = \sum_{i\in B}v(\theta_i)\cdot \pi_i(\theta) + \sum_{i\in S}v(\theta_i)\cdot (1 - \pi_i(\theta))
\end{equation}
for all type profile $\theta$. 
\end{definition}

In other words, an ODA is efficient if it always allocates items to those traders who value them most highly. In a market with dynamic participants, it is often not possible for an online mechanism to guarantee efficient allocations without the knowledge of the dynamics, because the mechanism's decision-making is challenged by the uncertainty of future participants. Therefore, we measure an online mechanism's efficiency by competitive analysis, namely, we compare the social welfare obtained by an online mechanism with the maximal social welfare one can achieve offline, i.e. when the mechanism knows all future coming reports. Given type profile $\theta$, let $Opt(\theta)$ be the \textbf{optimal allocation} giving the optimal/maximal social welfare. Note that $Opt(\theta)$ is also constrained by feasibility. The following notion of competitiveness will be used to measure the efficiency of ODAs.
\begin{definition}
\label{def_competitive}
An ODA $\mathcal{M}=(\pi, x)$ is \textbf{$c$-competitive} if for any type profile $\theta$, the  social welfare of $\pi(\theta)$
$W(\pi(\theta)) \geq \frac{W(Opt(\theta))}{c}$.
We refer to $c$ as the \textbf{competitive ratio} of $\mathcal{M}$ for efficiency.
We say that $\mathcal{M}$ is \textbf{competitive} if $\mathcal{M}$ is $c$-competitive for some constant $c>0$.
\end{definition}

Moreover, we say a mechanism is \textbf{individually rational} if it gives its participants non-negative utility, i.e. they are not forced to participate, and a mechanism is \textbf{budget balanced} if the mechanism receives zero profit or \textit{weakly budget balanced} if its profit is non-negative. All the mechanisms discussed in the rest are individually rational without further mention.

Note that the mechanism we defined in the above are \textit{deterministic}. Non-deterministic mechanisms are also proposed/discussed in the rest, and they can be represented as a probabilistic combination of deterministic ones.

\section{No Deterministic ODA is Universally Competitive}
\label{sect_neg}
In this section, we will demonstrate that no deterministic and truthful ODA is competitive in an adversarial model. That is, for any deterministic and truthful ODA $\mathcal{M} = (\pi, x)$, there exists a type profile $\theta$ such that the social welfare $W(\pi(\theta))$ is infinitely far from the optimal one $W(Opt(\theta))$.

\begin{theorem}
\label{the_nag}
 For any deterministic and truthful ODA $\mathcal{M} = (\pi, x)$ and any $c>0$, there exists a type profile $\theta$ such that $W(\pi(\theta)) \leq \frac{W(Opt(\theta))}{c}$.
\end{theorem}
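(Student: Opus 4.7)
The plan is to construct, for any given deterministic truthful ODA $\mathcal{M}=(\pi,x)$ and any $c>0$, a type profile on which $\pi$'s welfare is at most $W(Opt(\theta))/c$. The construction exploits the irrevocability of online decisions at a trader's departure time, and in fact does not require truthfulness at all (the obstruction is purely informational).

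First, I start with a minimal profile $\theta_1$ consisting of one seller $s$ with small valuation $\epsilon>0$ and a long active time, together with one buyer $b_1$ with valuation $1$ whose active time is collapsed to a single instant $t^*$. Because $b_1$ arrives and departs at the same moment, the mechanism must commit to the pair $(s,b_1)$ by time $t^*$ based only on information available then. Since $\mathcal{M}$ is deterministic, exactly one of two cases occurs on $\theta_1$: either $\mathcal{M}$ matches $s$ with $b_1$, or it does not.

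Second, I handle the case where $\mathcal{M}$ does not match $s$ and $b_1$ on $\theta_1$. Here $\theta_1$ itself is the bad profile: $W(\pi(\theta_1))=v_s=\epsilon$ while $W(Opt(\theta_1))=v_{b_1}=1$, so picking $\epsilon\le 1/c$ forces $W(\pi(\theta_1))\le W(Opt(\theta_1))/c$. Third, I handle the case where $\mathcal{M}$ does match $s$ and $b_1$ on $\theta_1$. I extend to $\theta_2$ by adding a new buyer $b_2$ with valuation $M\ge c$ whose arrival and departure are strictly later than $t^*$ but still within $s$'s active window. Because $\theta_1$ and $\theta_2$ agree on every trader arriving by time $t^*$, the online mechanism's decisions up to $t^*$ must coincide on both profiles; in particular, $s$ is already matched with $b_1$ in $\mathcal{M}(\theta_2)$ and is no longer available to trade with $b_2$. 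Hence $W(\pi(\theta_2))=v_{b_1}=1$, whereas the offline optimum swaps in $b_2$ for welfare $W(Opt(\theta_2))=M\ge c$, again giving the required bound.

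The main subtlety is making the online property precise: two type profiles that agree on all traders arriving by time $t$ must yield the same allocation decisions for traders with departure time $\le t$. This is implicit in the ODA setting because the mechanism cannot condition its commitment for a departing trader on reports that have not yet been submitted, but it is worth stating explicitly before invoking it to link the behaviour of $\mathcal{M}$ on $\theta_1$ and on $\theta_2$. Once that is in place, the two-case analysis above delivers the theorem, and I would note in passing that truthfulness (and the payment rule $x$) played no role, so the same impossibility holds for every deterministic online mechanism.
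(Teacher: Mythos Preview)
Your argument is correct and follows the same adversarial strategy as the paper: force the mechanism into a commitment on a simple instance and then reveal a high-value buyer it can no longer serve. The execution differs in two useful ways. First, by giving $b_1$ an instantaneous active window $a_{b_1}=d_{b_1}=t^*$, you compel the mechanism to decide on $b_1$ at $t^*$ regardless of its internal timing policy; the paper instead splits on whether the mechanism makes decisions only at asks' departure times or also at other times, and builds a separate adversarial instance in each regime. Your single minimal construction (one seller, one or two buyers) sidesteps that case analysis entirely. Second, you correctly observe that truthfulness and the payment rule play no role, whereas the paper asserts truthfulness is needed so that social welfare is ``correctly measured''; since the paper's own definition of $c$-competitiveness evaluates $W(\pi(\theta))$ directly on the given type profile and does not model strategic misreporting, your stronger conclusion---that the impossibility holds for every deterministic online allocation rule, truthful or not---is in fact justified.
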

\begin{proof}
A deterministic ODA makes decisions at a bid's/ask's arrival time, departure time and/or predefined time points. 

If decisions are not only made at asks' departure time, then we can always find a type profile $\theta^\prime$ such that the last arrived ask $\theta_{last}$ of $\theta^\prime$ is matched by $\mathcal{M}$ before $\theta_{last}$ departs. Let $\theta = (\theta^\prime, \theta_{*})$ where $\theta_* = (v_*, a_*, d_*)$ is a bid and it arrives after $\theta_{last}$ is matched and before $\theta_{last}$ departs. Since $\mathcal{M}$'s decision does not depend on traders not yet arrived, $\theta_*$ will not be matched by $\pi(\theta)$ because there is no unmatched ask available. There exists a $\theta_*$ such that $\theta_*$ is matched by $Opt(\theta)$ (if $v(\theta_*)$ is sufficiently large) and $W(\pi(\theta)) \leq \frac{v(\theta_*)}{c} \leq \frac{W(Opt(\theta))}{c}$. Therefore, if $v(\theta_*)$ approaches to $\infty$, $c$ will also approach to $\infty$.

Otherwise, i.e. decisions are only made at asks' departure time, there exists a type profile $\theta$ where the last arrived bid $\theta_* = (v_*, a_*, d_*)$ arrives after the second last ask's departure, and departs after the last ask's arrival but before the last ask's departure, where we also get $W(\pi(\theta)) \leq \frac{v(\theta_*)}{c} \leq \frac{W(Opt(\theta))}{c}$ if $v(\theta_*)$ is sufficiently large. Note that truthfulness is necessary to guarantee that all types are truthfully reported so that  social welfare is correctly measured. \qed
\end{proof}

Given the above impossibility, we can still search for non-deterministic and competitive mechanisms or examine cases where the dynamics is limited by, say, certain prior knowledge of the future participants. For instance, we may know the total number of traders arriving in the future, or traders' valuation satisfying some known distributions. With certain prior knowledge of the traders, ODAs with desirable properties are achievable, e.g.~\cite{blum_online_2006,Bredin_2007}.

In the rest of this paper, we further study two environments with prior information. In both cases, we assume that sellers are \textbf{patient}, i.e. they are active before the first buyer's arrival until the arrival of the last buyer. In the first case, we further assume that the demand is no more than the supply, while in the other case we assume that we know how many buyers will arrive. Although sellers are relatively static in these online double auctions, they are not as same as online one-sided auctions, even those considering reserve prices, because not only buyers but also sellers are playing strategically in double auctions. 
Moreover, we will show in the conclusion how this assumption can be relaxed.

\section{A Deterministic \& Competitive Online Double Auction}
\label{sect_case1}
Although we just showed that in general deterministic mechanism is not competitive, in this section we demonstrate that a simple deterministic mechanism, called $\mathcal{M}_{greedy}$,  is actually $2$-competitive, given that the demand (i.e. the number of buyers) is not more than the supply (i.e. the number of sellers).

\subsection{Specification of $\mathcal{M}_{greedy}$}
The allocation policy, called {\it Best-first (Bf) Allocation}, of the deterministic ODA $\mathcal{M}_{greedy}$ greedily matches a newly arrived bid to the best unmatched ask, if they are matchable, until there is no unmatched ask left or all bids have arrived.
\begin{framed}
\noindent\textbf{The Allocation Policy of $\mathcal{M}_{greedy}$}\\
\rule{\textwidth}{0.5pt}
\begin{itemize}
\item Rank all asks $\theta^A$ in ascending order of their valuations (breaking ties randomly).
\end{itemize}
Upon arrival of bid $\theta_i^B$:
\begin{itemize}
\item If the unmatched ask $\theta_j^A$ with the highest ranking position is matchable with $\theta_i^B$, match $\theta_i^B$ with $\theta_j^A$. Otherwise, $\theta_i^B$ is unmatched.
\end{itemize}
\end{framed}

Figure~\ref{fig:bf1} shows an example of the greedy allocation, where dots indicate asks and bids, the value beside each dot represents the valuation of the ask/bid, and the order of the bids is their arrival order (from top to bottom). There is a line between an ask and a bid if they are matched by the allocation. Before we describe the payment policy of $\mathcal{M}_{greedy}$, let us first introduce a notion of reachability used in the payment policy. 

Let $((\theta_1^{A^*},\theta_1^{B^*}),(\theta_2^{A^*},\theta_2^{B^*}),...)$ be the sequence of ask-bid pairs that are matched by the greedy allocation in bid's arrival order, e.g. $((2,7),(3,4),(5,6))$ in the example shown in Figure~\ref{fig:best-first}, we say that two matched pairs $(\theta_i^{A^*},\theta_i^{B^*})$ and $(\theta_j^{A^*}, \theta_j^{B^*})$, where $i\leq j$, are \textbf{reachable} from each other, if for all $i\leq m< j$, bid $\theta_m^{B^*}$ and ask $\theta_{m+1}^{A^*}$ are matchable. For the example shown in Figure~\ref{fig:best-first}, $(2,7)$ and $(3,4)$ are reachable from each other, but $(5,6)$ is not reachable from $(2,7)$ and $(3,4)$ because ask of valuation 5 and bid of valuation 4 are not matchable.

\begin{figure}[ht]
   \centering
   \begin{minipage}[b]{.25\linewidth}
	\centering
	\includegraphics[height=0.135\textheight]{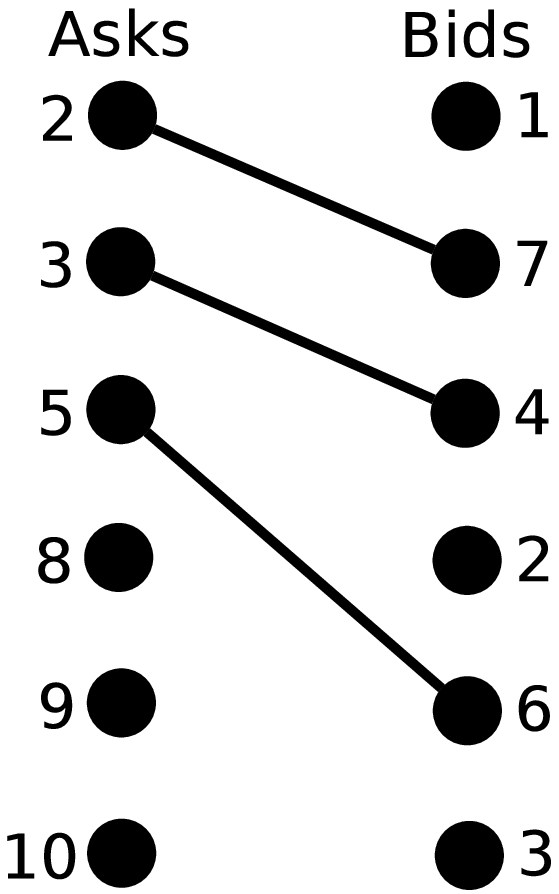}
	\subcaption{Best-first}\label{fig:bf1}
   \end{minipage}
   \begin{minipage}[b]{.25\linewidth}
   \centering
   \includegraphics[height=0.135\textheight]{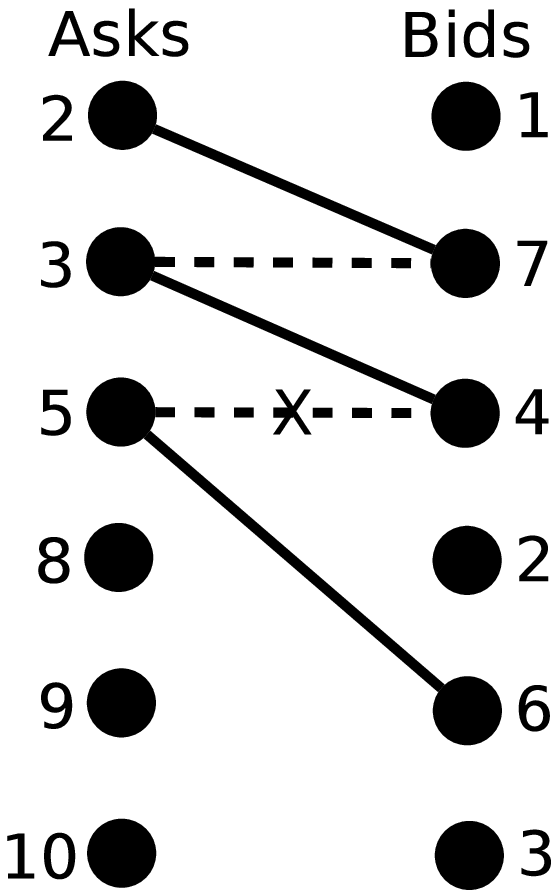}
   \subcaption{Reachability}\label{fig:bf_r}
   \end{minipage}
   \begin{minipage}[b]{.25\linewidth}
   \centering
   \includegraphics[height=0.135\textheight]{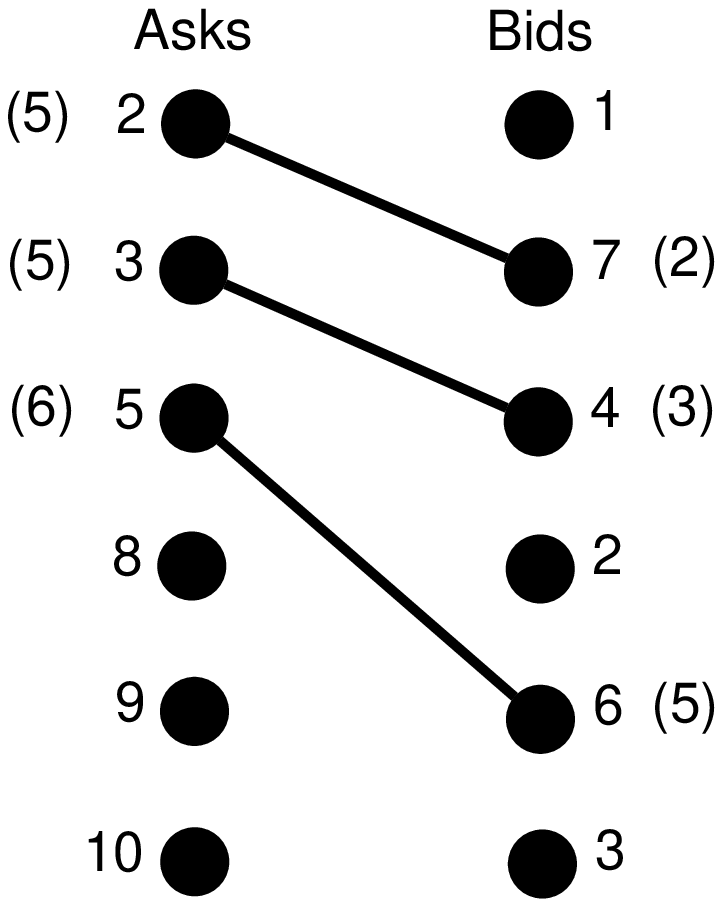}
   \subcaption{Payments}\label{fig:bf1_pay}
   \end{minipage}
   \caption{A Running Example of $\mathcal{M}_{greedy}$}
   \label{fig:best-first}
   \vspace{-2ex}
\end{figure}

The payment policy is described in the following, which shows a way to calculate the VCG payment (aka \textit{critical value}~\cite{Parkes_OnlineMD_2007}). Each matched buyer pays the amount equal to the valuation of the seller to whom he is matched, which is the infimum of all possible reported valuations for him to be matched in the auction, while each matched seller receives the supremum of all payments she can ask to get matched. There is no payment for unmatched traders, i.e. the mechanism is individually rational.
\begin{framed}
\noindent\textbf{The Payment Policy of $\mathcal{M}_{greedy}$}\\
\rule{\textwidth}{0.5pt}
\textbf{For each matched seller $i$ with type $\theta_i$:}
\begin{equation*}
x_i(\theta) = \left\{
   \begin{array}{ll}
    \min(v(\bar{\theta}_{min}^A),\max(v(\theta_{last}^B),v(\bar{\theta}_{max}^B))), & \text{if $\theta_{last}^B$ is reachable from $\theta_i$}\\
    \max(v(\theta_{last}^A),v(\bar{\theta}_{max}^B)), & \text{otherwise}\\
   \end{array}
 \right.
\end{equation*}
where 
\begin{itemize}
\item $\theta_{last}^A$ is the last matched ask, and $\theta_{last}^B$ is the last matched bid,
\item $\bar{\theta}_{min}^A$ is the unmatched ask with the lowest valuation ($v(\bar{\theta}_{min}^A) = \infty$ if $\bar{\theta}_{min}^A$ does not exist),
\item $\bar{\theta}_{max}^B$ is the unmatched bid with the highest valuation ($v(\bar{\theta}_{max}^B)= 0$ if $\bar{\theta}_{max}^B$ does not exist).
\end{itemize}
\textbf{For each matched buyer $j$ with type $\theta_j$:}
\begin{equation*}
x_j(\theta) = v(m(\theta_j)), \text{ where $m(\theta_j)$ is the ask matched to $\theta_j$.} 
\end{equation*}
\end{framed}

Example in Figure~\ref{fig:bf1_pay} shows the payments beside matched asks and bids according to the above payment rule. In this example, $v(\bar{\theta}_{min}^A)$ is 8, $v(\theta_{last}^A)$ is 5, $v(\bar{\theta}_{max}^B)$ is 3 and $v(\theta_{last}^B)$ is 6. It is easy to see that $\mathcal{M}_{greedy}$ is running a deficit in this example. In other words, $\mathcal{M}_{greedy}$ is not budget balanced, which is another important criterion that we cannot achieve at the same time in this work. 

\subsection{Properties of $\mathcal{M}_{greedy}$}
In the following, we prove that deterministic auction $\mathcal{M}_{greedy}$ is truthful and $2$-competitive.

\begin{theorem}
\label{the:truth}
 $\mathcal{M}_{greedy}$ is truthful. 
\end{theorem}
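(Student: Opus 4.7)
\textit{Proof plan.} My plan is to verify truthfulness separately for valuation misreports and for arrival/departure misreports, and then glue them together with an intermediate-report comparison. This lets me invoke the standard Myerson-style characterisation for the value part while handling time misreports by a direct monotonicity argument that exploits the patience of sellers.

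\textit{Valuation truthfulness.} For any fixed (possibly misreported) time report $(a_i^\prime, d_i^\prime)$ I would check that reporting the true value $v_i$ maximises trader $i$'s utility by verifying (a) monotonicity of the allocation in $i$'s reported value---higher for a buyer, lower for a seller---and (b) that $i$'s payment equals the critical value, i.e.\ the threshold between being unmatched and matched. For a matched buyer $j$ this is immediate: the greedy rule matches $j$ against the best unmatched ask $m(\theta_j)$ iff $v_j^\prime \geq v(m(\theta_j))$, so the payment $v(m(\theta_j))$ is exactly $j$'s critical value. For a matched seller $i$ I would argue by cases on whether the last matched bid $\theta_{last}^B$ is reachable from $i$ through the chain of successive matched pairs. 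In the reachable case, raising $i$'s value triggers a ``shift by one'' along the chain, which survives as long as the smallest unmatched ask $\bar\theta_{min}^A$ can take $i$'s old slot and the chain's endpoint still attaches to either $\theta_{last}^B$ or the largest unmatched bid $\bar\theta_{max}^B$. Tracking which constraint binds first yields the threshold $\min(v(\bar\theta_{min}^A),\max(v(\theta_{last}^B),v(\bar\theta_{max}^B)))$. In the non-reachable case the chain ends before reaching $\theta_{last}^B$, so no shift can propagate past the break, and the threshold is pinned by the last matched ask and the largest unmatched bid, giving $\max(v(\theta_{last}^A),v(\bar\theta_{max}^B))$. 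Verifying this case analysis---in particular chasing the cascade of re-matches down the reachability chain---is the main technical obstacle, since one must also check that tied valuations and the edge cases $v(\bar\theta_{min}^A)=\infty$ or $v(\bar\theta_{max}^B)=0$ are handled correctly by the min/max combination.

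\textit{Time truthfulness and gluing.} Fixing a truthful value, I would show that $(a_i,d_i)$ weakly dominates every permitted $[a_i^\prime,d_i^\prime]\subseteq[a_i,d_i]$. For a buyer, delayed arrival only lets other bids consume the best unmatched asks first, so the best ask available at $a_j^\prime$ is weakly worse than at $a_j$ and the resulting surplus is weakly smaller; an earlier reported departure is irrelevant to the matching decision itself, since patient sellers always overlap the buyer's reported interval. For a seller, patience means her truthful interval contains every buyer's interval, so any narrower reported window can only remove matchability with some bids and never add it; a short induction on bid arrival order then shows her utility is weakly reduced. Finally, for any composite misreport $(v_i^\prime,a_i^\prime,d_i^\prime)$, I use the intermediate report $(v_i,a_i^\prime,d_i^\prime)$: the value step dominates by valuation truthfulness (which holds for any fixed time report), and the time step dominates by time truthfulness with the truthful value, chaining truthful reporting to the weakly best strategy.
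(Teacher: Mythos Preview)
Your proposal is correct and follows essentially the same route as the paper: separate buyers from sellers, verify for each that the allocation is value-monotone and the payment is the critical value (for buyers this is immediate from $x_j=v(m(\theta_j))$; for sellers it is the reachability case split you outline, which the paper carries out in full sub-case detail), and handle time misreports by the direct monotonicity observations you give (earlier buyer arrival can only leave a weakly better best-unmatched ask; sellers are filtered out if not patient). The only cosmetic differences are that you frame the value part explicitly via the Myerson characterisation and add an explicit intermediate-report gluing step for composite deviations, whereas the paper argues each deviation type by a direct ``what changes in the matching'' analysis without naming the characterisation or the gluing; neither adds or removes any real work.
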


Instead of proving that the allocation is monotonic and that the payment is a kind of critical value~\cite{Parkes_OnlineMD_2007}, we demonstrate it in a more intuitive manner and the proof is given in the Appendix.

To check the efficiency of $\mathcal{M}_{greedy}$, we will apply competitive analysis, a method invented for analysing online algorithms. In other words, we will determine a competitive ratio $c$, defined in Definition~\ref{def_competitive}, for $\mathcal{M}_{greedy}$.

To that end, given a report profile $\theta$, we need to first know what is the optimal allocation, i.e. an allocation maximising social welfare, if we are aware of all future inputs/reports in advance. In this case, the optimal allocation is achieved by matching the highest bid (with respect to valuation) with the lowest ask, the second highest bid with the second lowest ask and so on, until there is no more matchable pair left. 
It is easy to check  that all asks that are matched by the optimal allocation are also matched by Best-first Allocation.


\begin{lemma}
\label{lem:matching}
All asks that are matched by the optimal allocation are also matched by Best-first Allocation.
\end{lemma}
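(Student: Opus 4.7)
The plan is to reindex the asks in ascending order of valuation as $a_1, a_2, \dots$ and to reduce the lemma to the size inequality $|A_{opt}| \le |A_{bf}|$, where $A_{opt}$ and $A_{bf}$ denote the sets of asks matched by $Opt$ and by best-first respectively. This reduction uses the key structural fact that both sets are prefixes of the ordering above. For best-first this is a direct induction on bid arrivals: the ``match to the lowest unmatched ask'' rule preserves the invariant that the matched set equals $\{a_1,\dots,a_m\}$ for some $m$. For the optimal allocation, the prefix property is forced by the construction given just before the lemma (pair the $i$-th highest bid with the $i$-th lowest ask until matchability fails), and can be reinforced by a standard exchange argument showing that swapping a matched higher-valuation ask for an unmatched lower-valuation one preserves matchability and strictly increases welfare.

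Next I would identify the bids used by $Opt$. Let $k=|A_{opt}|$ and let $b_1 \ge b_2 \ge \cdots \ge b_k$ denote the top-$k$ bids by valuation; by $Opt$'s construction $v(b_i) \ge v(a_i)$ for every $i \le k$. I would then run best-first on \emph{only} these $k$ bids, in their original arrival order $b_{\pi(1)},\dots,b_{\pi(k)}$. A short induction shows that when $b_{\pi(j)}$ arrives, the lowest unmatched ask is $a_j$, and since $v(b_{\pi(j)}) \ge v(b_k) \ge v(a_k) \ge v(a_j)$ the match succeeds. So best-first on the top-$k$ bids alone matches all of $a_1,\dots,a_k$.

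The heart of the proof, and the main obstacle, is an insertion-monotonicity claim: adding the remaining (non-top-$k$) bids back into the input stream cannot reduce best-first's final match count. I would prove it by inserting one bid $b'$ at a time and tracking the gap between the ``new'' and ``old'' runs. Up to the insertion point the two runs agree; at $b'$'s position either $b'$ is not matchable (gap stays $0$) or it extends the matched prefix by one (gap becomes $1$). On each subsequent common bid $b$, a case analysis against the thresholds $v(a_{m+1})$ and $v(a_{m+2})$ shows the gap is either preserved at $1$, closes to $0$ (the runs resynchronize to the same prefix $\{a_1,\dots,a_{m+1}\}$), or would require the ``new matches, old doesn't'' case --- ruled out because $v(a_{m+2}) \ge v(a_{m+1})$. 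The subtlety is that an inserted low-valuation bid can preempt a higher-valuation one that is then unable to reach the next (higher) ask, so the bookkeeping must track prefix lengths rather than individual bid-to-ask pairings. Once this invariant is established, the full-stream match count is at least the top-$k$-stream count, giving $|A_{bf}| \ge k = |A_{opt}|$, and the prefix property concludes $A_{opt} \subseteq A_{bf}$.
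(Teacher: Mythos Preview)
Your proof is correct, and the prefix observation together with the reduction to the inequality $|A_{bf}| \ge |A_{opt}|$ is the right skeleton. The paper itself does not supply a proof beyond the remark that the lemma is ``easy to check,'' so there is no detailed argument to compare against directly.

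That said, your insertion-monotonicity step is sound but is more machinery than the situation requires. A shorter direct argument, which is presumably what the paper has in mind as ``easy,'' runs Best-first on the \emph{full} input from the start rather than on the top-$k$ bids first. Let $k = |A_{opt}|$; by the optimal construction every top-$k$ bid has valuation at least $v(a_k)$. Whenever one of these bids arrives and the current matched prefix is $\{a_1,\dots,a_{m'}\}$ with $m' < k$, the lowest unmatched ask is $a_{m'+1}$, and $v(a_{m'+1}) \le v(a_k) \le v(b)$, so the bid matches. Hence either the match count reaches $k$ before all top-$k$ bids have arrived, or each of the $k$ top-$k$ bids contributes a match; in both cases $|A_{bf}| \ge k$. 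This bypasses the two-run comparison entirely. What your route buys in exchange for the extra work is a clean standalone fact (inserting additional bids never decreases Best-first's match count), which is a reusable monotonicity property even if it is overkill for this particular lemma.
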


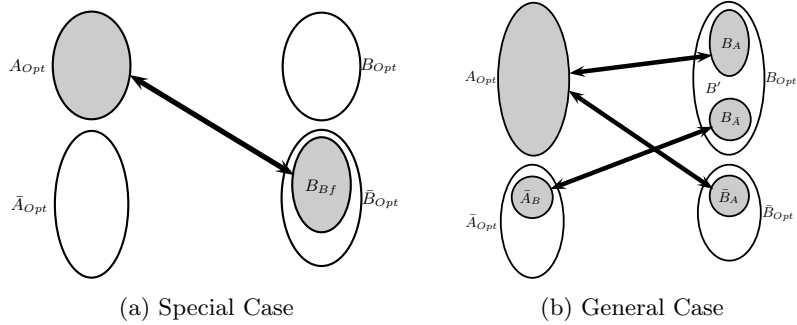
\begin{figure}[ht]
\vspace{-2ex}
   \centering
   \begin{minipage}[b]{.45\linewidth}
	\centering
\scalebox{0.7} 
{
\begin{pspicture}(0,-2.54)(7.0903125,2.54)
\definecolor{color454b}{rgb}{0.8,0.8,0.8}
\psellipse[linewidth=0.04,dimen=outer,fillstyle=solid,fillcolor=color454b](1.3871875,1.5)(0.75,1.04)
\psellipse[linewidth=0.04,dimen=outer](1.3871875,-1.13)(0.71,1.41)
\psellipse[linewidth=0.04,dimen=outer](5.7071877,-1.01)(0.77,1.31)
\psellipse[linewidth=0.04,dimen=outer,fillstyle=solid,fillcolor=color454b](5.7071877,-0.76)(0.57,0.92)
\psline[linewidth=0.1cm,arrowsize=0.05291667cm 2.0,arrowlength=1.4,arrowinset=0.4]{<->}(2.0971875,1.32)(5.1971874,-0.58)
\usefont{T1}{ptm}{m}{n}
\rput(0.18375,1.47){$A_{Opt}$}
\usefont{T1}{ptm}{m}{n}
\rput(0.21828125,-1.15){$\bar{A}_{Opt}$}
\usefont{T1}{ptm}{m}{n}
\rput(6.7821875,1.49){$B_{Opt}$}
\usefont{T1}{ptm}{m}{n}
\rput(6.8167186,-1.03){$\bar{B}_{Opt}$}
\usefont{T1}{ptm}{m}{n}
\rput(5.7026563,-0.81){$B_{Bf}$}
\psellipse[linewidth=0.04,dimen=outer](5.7071877,1.48)(0.75,1.04)
\end{pspicture} 
}
	\subcaption{Special Case}\label{fig:the_2com}
   \end{minipage}
   \begin{minipage}[b]{.45\linewidth}
	\centering
\scalebox{0.6} 
{
\begin{pspicture}(0,-3.07)(6.993125,3.07)
\definecolor{color91b}{rgb}{0.8,0.8,0.8}
\psellipse[linewidth=0.04,dimen=outer,fillstyle=solid,fillcolor=color91b](1.3371875,1.34)(0.8,1.71)
\psellipse[linewidth=0.04,dimen=outer](5.6171875,1.36)(0.8,1.71)
\psellipse[linewidth=0.04,dimen=outer](1.3071876,-1.8)(0.71,1.27)
\psellipse[linewidth=0.04,dimen=outer](5.6271877,-1.61)(0.71,1.08)
\psellipse[linewidth=0.04,dimen=outer,fillstyle=solid,fillcolor=color91b](5.6271877,2.14)(0.45,0.75)
\psellipse[linewidth=0.04,dimen=outer,fillstyle=solid,fillcolor=color91b](5.6471877,0.45)(0.47,0.48)
\psellipse[linewidth=0.04,dimen=outer,fillstyle=solid,fillcolor=color91b](1.3071876,-1.27)(0.47,0.48)
\psellipse[linewidth=0.04,dimen=outer,fillstyle=solid,fillcolor=color91b](5.6271877,-1.24)(0.45,0.47)
\psline[linewidth=0.1cm,arrowsize=0.05291667cm 2.0,arrowlength=1.4,arrowinset=0.4]{<->}(2.1171875,1.47)(5.2571874,1.87)
\psline[linewidth=0.1cm,arrowsize=0.05291667cm 2.0,arrowlength=1.4,arrowinset=0.4]{<->}(1.6971875,-1.09)(5.2771873,0.29)
\psline[linewidth=0.1cm,arrowsize=0.05291667cm 2.0,arrowlength=1.4,arrowinset=0.4]{<->}(2.0971875,1.09)(5.2771873,-1.09)
\usefont{T1}{ptm}{m}{n}
\rput(0.18375,1.36){$A_{Opt}$}
\usefont{T1}{ptm}{m}{n}
\rput(0.19828124,-1.84){$\bar{A}_{Opt}$}
\usefont{T1}{ptm}{m}{n}
\rput(6.7621875,1.34){$B_{Opt}$}
\usefont{T1}{ptm}{m}{n}
\rput(6.6967187,-1.62){$\bar{B}_{Opt}$}
\usefont{T1}{ptm}{m}{n}
\rput(5.6692185,2.12){$B_A$}
\usefont{T1}{ptm}{m}{n}
\rput(5.6579685,0.46){$B_{\bar{A}}$}
\usefont{T1}{ptm}{m}{n}
\rput(5.2723436,1.16){$B^\prime$}
\usefont{T1}{ptm}{m}{n}
\rput(5.617344,-1.26){$\bar{B}_A$}
\usefont{T1}{ptm}{m}{n}
\rput(1.2707813,-1.28){$\bar{A}_B$}
\end{pspicture}  
}
\subcaption{General Case}\label{fig:the_2com2}
   \end{minipage}
\caption{Best-first Allocation of $\mathcal{M}_{greedy}$}
\vspace{-2ex}
\end{figure}

\begin{theorem}
\label{the_2com}
 $\mathcal{M}_{greedy}$ is $2$-competitive.
\end{theorem}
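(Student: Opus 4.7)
The plan is to establish $W(Opt(\theta))\le 2\,W(\pi(\theta))$ directly, by decomposing both welfares and bounding the excess of $W(Opt)$ piece by piece. Because sellers are patient, any bid is matchable with any ask of no larger value, so the offline optimum admits a clean sorted form: listing asks as $v(a_1)\le\cdots\le v(a_n)$ and bids in descending order of value, one verifies that $A_{Opt}=\{a_1,\ldots,a_{k_O}\}$ is the ascending prefix of length $k_O=|A_{Opt}|$ and $B_{Opt}$ is the matching top segment of bids. A short induction over the sequence of matches shows $\mathcal{M}_{greedy}$ also consumes asks in ascending order of value, so $A_{Bf}=\{a_1,\ldots,a_k\}$ is again a prefix; Lemma~\ref{lem:matching} yields $k\ge k_O$, whence $\bar{A}_B:=A_{Bf}\setminus A_{Opt}$ is a middle band and $\bar{A}_{Bf}:=S\setminus A_{Bf}$ is the top tail of the sorted asks.

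Next I would write
\[
W(Opt)=\sum_{b\in B_{Opt}\cap B_{Bf}}v(b)+\sum_{b\in B'}v(b)+\sum_{a\in\bar{A}_{Bf}}v(a)+\sum_{a\in\bar{A}_B}v(a)
\]
with $B':=B_{Opt}\setminus B_{Bf}$, and prove three inequalities against $W(Bf)=\sum_{b\in B_{Bf}}v(b)+\sum_{a\in\bar{A}_{Bf}}v(a)$: (i) $\sum_{b\in B_{Opt}\cap B_{Bf}}v(b)\le\sum_{b\in B_{Bf}}v(b)$, which is immediate; (ii) the key estimate $\sum_{b\in B'}v(b)\le\sum_{a\in\bar{A}_{Bf}}v(a)$, discussed below; and (iii) $\sum_{a\in\bar{A}_B}v(a)\le\sum_{b\in B_{Bf}}v(b)$, which holds because $\mathcal{M}_{greedy}$ pairs each $a\in\bar{A}_B$ with a distinct bid in $B_{Bf}$ of no lower value. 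Summing these three yields $W(Opt)\le 2\sum_{b\in B_{Bf}}v(b)+2\sum_{a\in\bar{A}_{Bf}}v(a)=2\,W(Bf)$, proving the theorem.

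The main obstacle is bound (ii), and this is precisely where the demand-not-exceeding-supply assumption is used. For any $b\in B'$, the fact that $\mathcal{M}_{greedy}$ did not match $b$ forces every ask of value at most $v(b)$ to have been matched already at $b$'s arrival; since the set of Bf-matched asks grows as a prefix of the sorted list and ends as $\{a_1,\ldots,a_k\}$, in particular $v(a_{k+1})>v(b)$ for every $b\in B'$. The trivial count $|B'|\le|B|-k\le n-k=|\bar{A}_{Bf}|$, using $|B|\le n$, shows $\bar{A}_{Bf}$ has room to absorb $B'$. Ordering $B'$ ascending as $b_{(1)}\le\cdots\le b_{(m)}$ and defining $\phi(b_{(i)}):=a_{k+i}$ then produces an injection into $\bar{A}_{Bf}$ with $v(a_{k+i})\ge v(a_{k+1})>v(b_{(m)})\ge v(b_{(i)})$, from which (ii) follows by summation. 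Tightness of the ratio $2$ is witnessed by a family of instances with asks $\{\epsilon,V\}$ and bids $\{\epsilon',V-\delta\}$ arriving in that order, where the welfare ratio approaches $2$ as $\epsilon,\epsilon',\delta\to 0$ and $V\to\infty$, confirming that the constant cannot be improved.
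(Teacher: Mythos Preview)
Your argument is correct. The decomposition of $W(Opt)$ into the four pieces is valid because $A_{Opt}\subseteq A_{Bf}$ (Lemma~\ref{lem:matching}) forces $\bar{A}_{Opt}=\bar{A}_{Bf}\cup(A_{Bf}\setminus A_{Opt})$, and bounds (i)--(iii) all check out. The only place to be slightly careful is in (ii): one must note that whenever $B'\neq\emptyset$ we necessarily have $k<n$ (otherwise $|B_{Bf}|=n\ge|B|$ and every bid is matched), so $a_{k+1}$ exists and your injection $\phi$ is well-defined; you implicitly use this and it holds.

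Your route differs from the paper's. The paper argues in two stages: first a special case where Best-first matches $A_{Opt}$ entirely into $\bar{B}_{Opt}$, then the general case via a finer partition ($B_A,\bar{B}_A,B_{\bar{A}},\bar{A}_B$ with a different meaning than yours) and an algebraic manipulation of the ratio, writing $W(Bf)/W(Opt)=1-\Sigma/W(Opt)$ and bounding $\Sigma$ through a chain of inequalities. Your proof bypasses the case split and the ratio algebra by exploiting the prefix structure of $A_{Bf}$ more directly, yielding a shorter term-by-term comparison $W(Opt)\le 2W(Bf)$. Both proofs hinge on the same two facts---Lemma~\ref{lem:matching} and the observation that every $b\in B'$ has value strictly below the smallest Bf-unmatched ask, with $|\bar{A}_{Bf}|\ge|B'|$ coming from the supply assumption---but yours packages them more economically. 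You also add a tightness family showing the constant~$2$ is best possible, which the paper does not include.
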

\begin{proof}
We first show that this competitive ratio is achievable under a special case, and then we prove that in any other cases the ratio is also achievable.

The special case is that all matched asks of the optimal allocation are matched to unmatched bids of the optimal allocation by Best-first Allocation, and unmatched asks of the optimal allocation are also not matched by Best-first Allocation (see Figure~\ref{fig:the_2com} for example, where the coloured areas are the asks and bids matched by Best-first Allocation and double-sided arrows indicate the matching relation). Let $A_{Opt}$ and $\bar{A}_{Opt}$ be the matched and unmatched asks respectively in the optimal allocation, and $B_{Opt}$ and $B_{Bf}$ be the matched bids in the optimal allocation and Best-first Allocation respectively and $\bar{B}_{Opt}$ and $\bar{B}_{Bf}$ be the corresponding unmatched bids. We can get that $B_{Opt} \cap B_{Bf} = \emptyset$, i.e. no bid from $B_{Opt}$ can be matched to any ask from $\bar{A}_{Opt}$. We also know that $\bar{A}_{Opt} \neq \emptyset$ and $|\bar{A}_{Opt}| \geq |B_{Opt}|$ because we assumed that the demand is not more than the supply. Therefore,
\begin{equation}
\label{equ1}
\sum_{\theta_i\in \bar{A}_{Opt}}v(\theta_i) > \sum_{\theta_i\in B_{Opt}}v(\theta_i). 
\end{equation}
The social welfare of the optimal allocation is:
\begin{equation}
\label{equ2}
W(Opt(\theta)) = \sum_{\theta_i\in \bar{A}_{Opt}}v(\theta_i) + \sum_{\theta_i\in B_{Opt}}v(\theta_i).
\end{equation}
The social welfare of Best-first Allocation is:
\begin{equation}
\label{equ3}
W(Bf(\theta)) = \sum_{\theta_i\in \bar{A}_{Opt}}v(\theta_i) + \sum_{\theta_i\in B_{Bf}}v(\theta_i).
\end{equation}
Combining \eqref{equ1}, \eqref{equ2} and \eqref{equ3}, we get
\begin{equation*}
\label{equ4}
\frac{W(Bf(\theta))}{W(Opt(\theta))} >
\frac{\sum_{\theta_i\in \bar{A}_{Opt}}v(\theta_i) + \sum_{\theta_i\in B_{Bf}}v(\theta_i)}{\sum_{\theta_i\in \bar{A}_{Opt}}v(\theta_i) + \sum_{\theta_i\in \bar{A}_{Opt}}v(\theta_i)} >
\frac{1}{2}.
\end{equation*}

So far, we have proved the theorem in a special case. In general case, some asks of $A_{Opt}$ might be matched to some bids of $B_{Opt}$, and some asks of $\bar{A}_{Opt}$ might be matched to some bids of $B_{Opt}$ by Best-first Allocation. Due to Lemma~\ref{lem:matching}, we know that all asks in $A_{Opt}$ are matched by Best-first Allocation. Let $B_A$ and $\bar{B}_A$ be all the bids from $B_{Opt}$ and $\bar{B}_{Opt}$ respectively that are matched to asks of $A_{Opt}$ by Best-first Allocation. Let $\bar{A}_B$ be the asks from $\bar{A}_{Opt}$ that are matched to some bids of $B_{Opt}$ by Best-first Allocation, and $B_{\bar{A}}$ be the corresponding bids matched to $\bar{A}_{Opt}$. Let $B^\prime = B_{Opt} \setminus (B_A \cup B_{\bar{A}})$ be the asks from $B_{Opt}$ that are not matched by Best-first Allocation (see Figure~\ref{fig:the_2com2}). Therefore, the social welfare of Best-first Allocation is:
\begin{equation*}
 W(Bf(\theta)) = \sum_{\theta_i\in \bar{A}_{Opt}\setminus\bar{A}_B}v(\theta_i) + \sum_{\theta_i\in B_A \cup \bar{B}_A \cup B_{\bar{A}}}v(\theta_i).
\end{equation*}
So, we get
\begin{align}
\label{equ5}
 \frac{W(Bf(\theta))}{W(Opt(\theta))} 
&= \frac{\sum_{\theta_i\in \bar{A}_{Opt}\setminus\bar{A}_B}v(\theta_i) + \sum_{\theta_i\in B_A \cup \bar{B}_A \cup B_{\bar{A}}}v(\theta_i)}{\sum_{\theta_i\in \bar{A}_{Opt}}v(\theta_i) + \sum_{\theta_i\in B_{Opt}}v(\theta_i)} \nonumber \\
&= \frac{\sum_{\theta_i\in \bar{A}_{Opt} \cup B_{Opt}}v(\theta_i) - \varSigma}{\sum_{\theta_i\in \bar{A}_{Opt} \cup B_{Opt}}v(\theta_i)} \nonumber \\
&= 1 - \frac{\varSigma}{\sum_{\theta_i\in \bar{A}_{Opt} \cup B_{Opt}}v(\theta_i)},
\end{align}
where $\varSigma = \sum_{\theta_i\in B^\prime \cup \bar{A}_B}v(\theta_i) - \sum_{\theta_i\in \bar{B}_A}v(\theta_i)$.

Since the number of bids is not more than that of asks, i.e. the number of unmatched bids is not more than that of unmatched asks, we get $|\bar{A}_{Opt}\setminus \bar{A}_B| > |B^\prime|$. We know that no ask from $\bar{A}_{Opt}\setminus \bar{A}_B$ can be matched to any bid in $B^\prime$, so $\sum_{\theta_i \in \bar{A}_{Opt}\setminus \bar{A}_B} v(\theta_i) \geq \sum_{\theta_i \in B^\prime} v(\theta_i)$, i.e. $\sum_{\theta_i \in \bar{A}_{Opt}} v(\theta_i) \geq \sum_{\theta_i \in B^\prime \cup \bar{A}_B} v(\theta_i)$. Thus,
\begin{equation}
\label{equ6}
 \frac{\varSigma}{\sum_{\theta_i\in \bar{A}_{Opt} \cup B_{Opt}}v(\theta_i)} \leq 
 \frac{\varSigma}{\sum_{\theta_i\in B^\prime \cup \bar{A}_B}v(\theta_i) + \sum_{\theta_i\in B_{Opt}}v(\theta_i)}.
\end{equation}
Since
\begin{equation*}
 \sum_{\theta_i\in B_{Opt}} v(\theta_i) = 
 \sum_{\theta_i\in B_{\bar{A}} \cup B_{A} \cup B^\prime} v(\theta_i) \geq 
 \sum_{\theta_i\in \bar{A}_B \cup B_{A} \cup B^\prime} v(\theta_i) \geq 
 \sum_{\theta_i\in \bar{A}_B \cup B^\prime} v(\theta_i),
\end{equation*}
we conclude that 
\begin{align}
\label{equ7}
& \frac{\varSigma}{\sum_{\theta_i\in B^\prime \cup \bar{A}_B}v(\theta_i) + \sum_{\theta_i\in B_{Opt}}v(\theta_i)} \leq \nonumber \\
& \frac{\varSigma}{\sum_{\theta_i\in B^\prime \cup \bar{A}_B}v(\theta_i) + \sum_{\theta_i\in \bar{A}_B \cup B^\prime}v(\theta_i)} \leq \nonumber \\
& \frac{\varSigma + \sum_{\theta_i\in \bar{B}_A}v(\theta_i)}{\sum_{\theta_i\in B^\prime \cup \bar{A}_B}v(\theta_i) + \sum_{\theta_i\in \bar{A}_B \cup B^\prime}v(\theta_i)} = \frac{1}{2}.
\end{align}
Combining \eqref{equ5}, \eqref{equ6} and \eqref{equ7}, we get $\frac{W(Bf(\theta))}{W(Opt(\theta))} \geq \frac{1}{2}$. \qed
\end{proof}

\section{Reducing Double Auctions to One-sided Auctions}
\label{sect_case2}
In this section, we study another case where we can predict how many buyers will arrive. Given this prior information, we propose a reduction framework which reduces an ODA to an \textit{online one-sided auction} that aims to select the $k$-best bids from $n$ bids arriving in an online fashion, e.g. secretary-problem-based online auctions~\cite{Hajiaghayi_2004,Kleinberg_2005,Buchbinder_2010}.

The main difference between ODAs and online one-sided auctions is that, instead of allocating $k$ items to $n$ agents in a one-sided auction, we do not know how many items we should allocate to buyers in ODAs, because items are provided by sellers which are unpredictable. Moreover, it is not efficient to allocate an item from a seller with a high valuation to a buyer with a lower valuation. For instance, in a double auction with only one seller, the auctioneer does not just select any buyer but the one with a valuation at least better than the seller's. Since the goal of an efficient double auction is to allocate items to traders with higher valuations, we can actually treat sellers as additional buyers and apply efficient one-sided auction. In the rest of this section, we will show how to consider sellers as additional buyers to design truthful and competitive ODAs by applying truthful and competitive online one-sided auctions.

\subsection{The Reduction}
Let $n^A$ and $n^B$ be the number of asks $\theta^A$ and bids $\theta^B$ respectively. Let $\mathcal{A}$ be an online one-sided auction. We construct an ODA $\mathcal{M_A}$ from $\mathcal{A}$ as follows. The intuition is considering sellers as additional buyers by giving asks opportunity to compete with bids in order to gain items back for sellers, if sellers' valuations are comparatively high among the valuations of both sellers and buyers. By doing this, a seller with a comparatively high valuation will have a comparatively high chance to get her item back if maximising social welfare is an objective of $\mathcal{A}$. In order to treat relatively static sellers as buyers, we assign them a new online arrival order which is consistent with the arrival of buyers.

\begin{framed}
\noindent\textbf{Online Double Auction $\mathcal{M_A}$ based on Online One-sided Auction $\mathcal{A}$}\\
\rule{\textwidth}{0.5pt}
 \begin{enumerate}
  \item Choose a position $l_i \in [1,n^A+n^B]$ for each ask $\theta_i$ according to a discrete probability distribution function $f(x)$ that satisfies the assumptions made on the arrival order of buyers.
  \item Run $\mathcal{A}$ on the inputs that contain both asks $\theta^A$ and bids $\theta^B$ where each ask $\theta_i$ arrives right after the $(l_i-1)$-th input arrived.
  \item If a bid $\theta_i$ is selected by $\mathcal{A}$ with payment $p_i$ and $v(\theta_i) \geq v(\theta_j)$, where $\theta_j$ is the currently unmatched ask with lowest valuation (breaking ties randomly), then $\theta_i$ is matched to $\theta_j$ with payment 
\begin{equation}
\label{eq_payBid}
 x_i(\theta) = \max(p_i,v(\theta_j)).
\end{equation}
Otherwise, $\theta_i$ is unmatched.
  \item Once the matching/allocation is done, the payment for each matched ask $\theta_j$ is as same as the one defined in $\mathcal{M}_{greedy}$, except that the bids considered in the payment here are those selected by $\mathcal{A}$ only.
\end{enumerate}
\end{framed}

For the probability distribution function $f(x)$ of $\mathcal{M_A}$, we only require that $f(x)$ satisfies the assumptions made on the arrival order of the inputs of $\mathcal{A}$. In other words, the arrival order assigned to asks satisfies the assumptions made on the arrival order of bids. For instance, if $\mathcal{A}$ is based on a random-ordering model, e.g. secretary-problem-based online auctions~\cite{Kleinberg_2005}, then $f(x)$ can only be a random distribution function. If $\mathcal{A}$ is based on an adversary-ordering model, then $f(x)$ can be any distribution function. More interestingly, if $\mathcal{A}$ has no assumption made on the arrival order of its inputs, we can utilise $f(x)$ for other purpose. In single-seller case, for example, we might push the ask to the front of the inputs to guarantee a higher expected valuation of the selected trader and therefore to further improve the efficiency of $\mathcal{M_A}$.

Figure~\ref{fig:MA_example} shows a running example of $\mathcal{M_A}$. $\mathcal{M_A}$ first chooses a position for each ask, then runs $\mathcal{A}$ on the merged input and selects the winners (indicated by `*'), and finally determines the final asks and bids that are matched by using the winners selected by $\mathcal{A}$ (traders allocated an item by $\mathcal{M_A}$ are indicated by circles). From the example in Figure~\ref{fig:MA_example}, we can say that both the ask of value 2 and the bid of value 6 do not get item in the end, although they are selected by $\mathcal{A}$. That is, $\mathcal{M_A}$ might improve the social welfare of the allocation given by $\mathcal{A}$.

\begin{figure}[ht]
   \centering
  \includegraphics[width=0.45\textwidth]{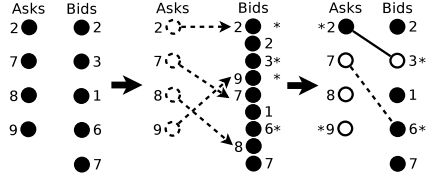}
   \caption{A Running Example of $\mathcal{M_A}$}
  \label{fig:MA_example}
   \vspace{-2ex}
\end{figure}

\subsection{Key Properties of $\mathcal{M_A}$}
We will prove that the truthfulness and efficiency of $\mathcal{M_A}$ directly follow that of the one-sided auction $\mathcal{A}$, and then show two instances of $\mathcal{M_A}$ by utilizing secretary-based online auctions.

\begin{theorem}
 If $\mathcal{A}$ is truthful, then $\mathcal{M_A}$ is truthful.
\end{theorem}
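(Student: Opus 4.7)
The plan is to establish truthfulness separately for buyers and for sellers, leveraging the truthfulness of $\mathcal{A}$ in the selection phase and the critical-value structure of $\mathcal{M}_{greedy}$-style payments in the subsequent matching phase. Throughout, I would fix an arbitrary trader $i$ with true type $\theta_i$, any permitted misreport $\theta_i^\prime$, and a report profile $\theta_{-i}$ of the other traders, and compare $i$'s utility under $(\theta_i,\theta_{-i})$ and $(\theta_i^\prime,\theta_{-i})$.

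For a buyer $i$, the key observation is that $i$ is matched in $\mathcal{M_A}$ iff both (a) $\mathcal{A}$ selects $i$, with some payment $p_i$, and (b) $v(\theta_i) \geq v(\theta_j)$, where $\theta_j$ is the currently unmatched ask with the lowest valuation; when matched, $i$ pays $\max(p_i, v(\theta_j))$. I would first establish that the overall selection of $i$ by $\mathcal{M_A}$ is monotone in $i$'s reported valuation: monotonicity of $\mathcal{A}$'s selection follows from truthfulness of $\mathcal{A}$ via the standard Myerson characterization, and the filter in (b) is manifestly monotone in $v(\theta_i)$. I would then identify the critical value for $i$ in $\mathcal{M_A}$---the infimum reported valuation at which both (a) and (b) hold---and observe that this infimum is exactly $\max(p_i, v(\theta_j))$, matching the payment in \eqref{eq_payBid}. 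Monotonicity plus critical-value pricing yields truthfulness for buyers.

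For a seller $i$, first note that her position $l_i$ in $\mathcal{A}$'s input stream is drawn from $f(x)$ independently of her report, so she cannot directly manipulate it. Moreover, step 3 of $\mathcal{M_A}$ only matches \emph{bids} selected by $\mathcal{A}$ to unmatched asks, so $\mathcal{A}$'s selection or non-selection of the seller's own ask has no direct effect on her allocation. Consequently, once the set $B^*$ of bids that $\mathcal{A}$ selects is fixed, the seller's matching and payment are precisely those of $\mathcal{M}_{greedy}$ applied to $(\theta^A, B^*)$, and Theorem~\ref{the:truth} directly implies that the seller has no incentive to misreport whenever $B^*$ does not change with her report.

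The main obstacle is closing the seller-side argument when $B^*$ itself can shift with her report, since her ask enters $\mathcal{A}$ as an input and may displace or make room for other bids in $\mathcal{A}$'s selection. I would close this gap by arguing that, for any fixed realization of the positions drawn from $f(x)$ and any fixed $\theta_{-i}$, monotonicity of $\mathcal{A}$'s selection in each single input (inherited from truthfulness of $\mathcal{A}$) implies that any change in $B^*$ induced by the seller's misreport can only weakly reduce her $\mathcal{M}_{greedy}$-style utility. Loosely, raising her report can only swap her ask out of $\mathcal{A}$'s selection in exchange for a lower-valued input, leaving the $\mathcal{M}_{greedy}$ critical-value price she can command no higher, while lowering it acts symmetrically; combining this $B^*$-monotonicity claim with the within-$B^*$ truthfulness above then yields the theorem. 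I expect the case analysis on the $\mathcal{M}_{greedy}$ payment formula (reachable vs.\ non-reachable, and whether the seller herself enters the set of unmatched asks $\bar{\theta}_{min}^A$) to be the technically delicate part of the proof.
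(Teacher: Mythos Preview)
Your proposal follows essentially the same two-phase structure as the paper's proof: buyers are handled by composing the truthfulness of $\mathcal{A}$ with the ask-threshold filter (the paper does this by direct case analysis on whether $\mathcal{A}$ selects $i$, you do it via Myerson monotonicity plus critical-value pricing---these are equivalent), and sellers are handled by first reducing to $\mathcal{M}_{greedy}$ on the fixed set $B^*$ of bids selected by $\mathcal{A}$ and then invoking monotonicity of $\mathcal{A}$ to control how $B^*$ shifts with the seller's report.

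One directional slip to fix: when a seller \emph{raises} her reported valuation she becomes \emph{more} likely to be selected by $\mathcal{A}$ (by monotonicity), so her ask is swapped \emph{into} $\mathcal{A}$'s selection, displacing a lower-valued input and potentially shrinking $B^*$---not swapped out as you wrote. The paper argues precisely this direction (higher seller report $\Rightarrow$ fewer bids selected $\Rightarrow$ lower matching probability and lower payment for the seller in $\mathcal{M_A}$) and the symmetric one for a lower report. Both your outline and the paper's proof remain informal on exactly how the change in $B^*$ interacts with the $\mathcal{M}_{greedy}$ payment formula; you are right that this is the delicate part, and the paper does not spell it out much further than you do.
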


\begin{proof}
We will prove for sellers and buyers respectively. We need to show that both sellers and buyers will reveal their true valuation, arrive and departure truthfully, i.e. traders are incentivized to arrive as early as they can and depart as late as possible.

For a buyer $i$ of type $\theta_i$ that is not selected by $\mathcal{A}$, $\theta_i$ will also not be matched by $\mathcal{M_A}$. If $i$ misreported $\theta_i^\prime$ and is selected by $\mathcal{A}$, then $v(\theta_i) - p_i \leq 0$, i.e. $i$ will get a negative (expected) utility in $\mathcal{A}$, because $\mathcal{A}$ is truthful. Therefore, if $\theta_i^\prime$ is matched by $\mathcal{M_A}$, then $i$'s utility $v(\theta_i) - \max(p_i,v(\theta_j)) \leq v(\theta_i) - p_i \leq 0$. 

For a buyer $i$ of type $\theta_i$ that is selected by $\mathcal{A}$, $\theta_i$ will be either matched or unmatched by $\mathcal{M_A}$ depending on $v(\theta_i)$ and the lowest unmatched ask $\theta_j$ when $\theta_i$ is selected. If $v(\theta_i) \geq v(\theta_j)$, $\theta_i$ is matched by $\mathcal{M_A}$. Otherwise, $\theta_i$ is unmatched. If $\theta_i$ is matched by $\mathcal{M_A}$, then we know that $i$'s utility $v(\theta_i) - \max(p_i,v(\theta_j))$ is maximised, because $v(\theta_i) - p_i$ is maximised by $\mathcal{A}$ and $v(\theta_j)$ is independent of $i$ and it is minimised if $i$ arrives at his earliest arrival time. If $\theta_i$ is not matched by $\mathcal{M_A}$, then we have $p_i \leq v(\theta_i) < v(\theta_j)$. Since $v(\theta_j)$ is independent of $i$ and it is minimised if $i$ arrives at his earliest arrival time, $i$ can only be matched by $\mathcal{M_A}$ if $i$ misreported $\theta_i^\prime$ such that $v(\theta_i^\prime) \geq v(\theta_j)$, but then his utility $v(\theta_i) - \max(p_i,v(\theta_j)) < 0$.

We conclude from the above that buyers are incentivized to arrive at their earliest arrival time and report their true valuation. Moreover, $\mathcal{M_A}$ does not use their departure time for decision-making, so the truthfulness of their departure directly follows that of $\mathcal{A}$.


For sellers, since we assume that all sellers are patient, i.e. sellers arriving after the arrival of the first bid or departing before the last bid's arrival are not considered by $\mathcal{M_A}$, all sellers are incentivized to arrive/depart truthfully. 
The following proves that sellers are also incentivized to reveal their true valuation. 

For a matched seller $i$ with ask $\theta_i$, from the truthfulness of $\mathcal{M}_{greedy}$, we know that the payment of $\mathcal{M_A}$ also guarantees truthfulness for sellers, if the bids selected by $\mathcal{A}$ are the same when $i$ reported differently. 
However, the bids selected by $\mathcal{A}$ might change if $i$ reported a different valuation, so we need to check that the changes are not beneficial for $i$. Since $\mathcal{A}$ is truthful, the wining probability for trader $i$ with valuation report $v_i^\prime > v_i$ should be at least that with valuation report $v_i$ in $\mathcal{A}$ (aka monotonicity). If $i$ reported a higher valuation, she might lower the winning probabilities of others/buyers. That is, the winning probability of $i$ with a higher valuation report in $\mathcal{M_A}$ might be decreased and also the payment will be potentially decreased. Thus, it is not beneficial for $i$ to misreport a higher valuation. If $i$ misreported a lower valuation, then the winning probability of others might be increased, and therefore, the bids selected by $\mathcal{A}$ will have relatively lower valuations and more bids might be selected. Since the number of agents $\mathcal{A}$ can select is fixed, by misreporting a lower valuation, $i$ increases the chance for lower-value buyers to get matched and receives a lower pay. Thus, $i$ reduces her chance to get matched with positive utility by reporting a lower valuation. Similarly, we can check for unmatched sellers.
%
\qed
\end{proof}

\begin{theorem}
 If $\mathcal{A}$ is $c$-competitive, then $\mathcal{M_A}$ is $c$-competitive.
\end{theorem}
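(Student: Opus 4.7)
The plan is to chain two inequalities: $\mathcal{A}$'s competitive guarantee, and a structural fact that $\mathcal{M_A}$'s double-auction welfare dominates $\mathcal{A}$'s one-sided welfare on the merged input. Throughout, I rely on the patient-sellers assumption to collapse matchability to the single condition $v_s\le v_b$.

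First, I will observe that under patient sellers $W(Opt(\theta))$ equals the sum of the $n^A$ largest valuations among the $n^A+n^B$ traders in $T=S\cup B$. The point is that any feasible allocation partitions the traders into $n^A$ \emph{keepers} (sellers retaining their items plus buyers acquiring items) and $n^B$ non-keepers, and a rearrangement argument shows that taking the top $n^A$ valuations as keepers is always feasible and optimal. Viewing $\mathcal{A}$ as a one-sided auction that selects $k=n^A$ winners from the $n=n^A+n^B$ merged inputs then makes $\mathcal{A}$'s offline benchmark coincide with $W(Opt(\theta))$, so the $c$-competitiveness of $\mathcal{A}$ yields
\[
\sum_{t\in W}v(\theta_t)\;\ge\;\frac{W(Opt(\theta))}{c},
\]
where $W$ denotes the set that $\mathcal{A}$ selects.

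Next, I will prove $W(\mathcal{M_A}(\theta))\ge\sum_{t\in W}v(\theta_t)$. Let $K$ be $\mathcal{M_A}$'s keeper set (matched buyers together with unmatched sellers); by conservation of items $|K|=n^A=|W|$. Let $a$ be the number of realised trades and sort the sellers $s_1\le\cdots\le s_{n^A}$ by valuation. Because $\mathcal{M_A}$ always pairs a qualifying selected bid with the current cheapest unmatched ask, a quick induction shows the matched asks are exactly $s_1,\ldots,s_a$, so every unmatched seller has valuation at least $s_{a+1}$. Moreover, any selected bid that $\mathcal{M_A}$ leaves unmatched had valuation strictly below the cheapest unmatched ask available at its turn, which is at most $s_{a+1}$. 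Consequently every element of $W\setminus K$ (the selected bids that $\mathcal{M_A}$ refuses to match, together with the selected sellers drawn from $\{s_1,\ldots,s_a\}$ whose items $\mathcal{M_A}$ trades away) has valuation at most $s_{a+1}$, while every element of $K\setminus W$ must be an unmatched non-selected seller and so has valuation at least $s_{a+1}$. Equal cardinality of the two difference sets then yields $\sum_{t\in K\setminus W}v(\theta_t)\ge\sum_{t\in W\setminus K}v(\theta_t)$, and adding the common $K\cap W$ contribution to both sides gives the desired inequality.

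Composing the two inequalities establishes $c$-competitiveness. The main obstacle is the second step: $\mathcal{M_A}$ deliberately overrides some of $\mathcal{A}$'s choices, by refusing to match certain selected bids and by trading away certain selected asks, and one must confirm this re-shuffling never decreases welfare. The thresholding around $s_{a+1}$ is what closes the gap, exploiting the greedy ``cheapest unmatched ask first'' rule to pin down exactly which sellers end up matched.
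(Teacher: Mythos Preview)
Your proof is correct and follows the same two-step chain as the paper: first invoke $\mathcal{A}$'s competitive guarantee on the merged input, then show $W(\mathcal{M_A}(\theta))\ge W(\mathcal{A}(\theta))$. Where the paper only asserts the second inequality with a sentence of intuition and an example, your $s_{a+1}$-threshold argument (together with the explicit identification of $W(Opt(\theta))$ with the sum of the top $n^A$ valuations) supplies the rigour the paper's sketch omits.
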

\begin{proof}
Given report profile $\theta$, let $A_{\mathcal{A}}$ and $B_{\mathcal{A}}$ be the sets of selected asks and bids in $\mathcal{A}$ respectively. Since $\mathcal{A}$ is $c$-competitive for maximising social welfare, we get 
 $W(\mathcal{A}(\theta)) = \sum_{\theta_i \in A_{\mathcal{A}} \cup B_{\mathcal{A}}} v(\theta_i) \geq \frac{W(Opt(\theta))}{c}$.
Based on the winners $A_{\mathcal{A}} \cup B_{\mathcal{A}}$ selected by $\mathcal{A}$, $\mathcal{M_A}$ will further improve the allocation. More specifically, an ask selected (unselected) by $\mathcal{A}$ might sell (hold) the item in $\mathcal{M_A}$ (e.g. the ask of value 2 in Figure~\ref{fig:MA_example}), while a bid selected by $\mathcal{A}$ might not be matched by $\mathcal{M_A}$ if the bid's valuation is comparatively lower (e.g. the bid of value 6 in Figure~\ref{fig:MA_example}). The reason is that $A_{\mathcal{A}}$ is only used to determine at least the $|A_{\mathcal{A}}|$-best sellers will keep their items, and that some bids of $B_{\mathcal{A}}$ are not matched by $\mathcal{M_A}$ if their valuations are not good enough. 
Thus,
 $W(\mathcal{M_A}(\theta)) \geq W(\mathcal{A}(\theta)) \geq \frac{W(Opt(\theta))}{c}$.
\qed
\end{proof}

\begin{corollary}
\label{cor1}
 Let $k$ be the number of sellers, there exists a truthful ODA $\mathcal{M_A}$ that is 
\begin{itemize}
 \item $2\sqrt{e}$-competitive for $k=1$.
 \item $(1+\frac{C}{\sqrt{k}})$-competitive.
\end{itemize}
\end{corollary}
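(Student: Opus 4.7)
The plan is to derive both bounds as immediate consequences of the two preservation theorems proved just above: since truthfulness and $c$-competitiveness both carry from $\mathcal{A}$ to $\mathcal{M_A}$, it is enough to plug in, for each bullet, a truthful one-sided online auction that already achieves the claimed competitive ratio on a uniformly random stream. Because the corollary assumes buyers arrive in a uniformly random order, I would pick $f(x)$ so that each of the $k$ asks is assigned an arrival slot chosen uniformly from $\{1,\dots,n^A+n^B\}$; a short combinatorial check then shows that the resulting merged stream of $k+n^B$ inputs presented to $\mathcal{A}$ is itself a uniformly random permutation, so any random-order secretary algorithm can be invoked as $\mathcal{A}$ as a black box.

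For the second bullet I would instantiate $\mathcal{A}$ with Kleinberg's truthful $k$-secretary mechanism \cite{Kleinberg_2005}, which selects $k$ items from a uniformly random stream and is $(1+C/\sqrt{k})$-competitive against the sum of the top-$k$ valuations. With $k$ equal to the number of sellers, feeding the merged stream of length $k+n^B$ into this mechanism and invoking the two preservation theorems reproduces the $(1+C/\sqrt{k})$ ratio verbatim. For the first bullet the $k$-secretary bound is uninformative at $k=1$, so I would instead use the truthful single-item secretary auction of \cite{Hajiaghayi_2004}, whose $2\sqrt{e}$ competitive ratio on a uniformly random stream passes through the preservation theorems intact.

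The only non-bibliographic step is to check that the benchmark the chosen secretary mechanism competes against coincides with the ODA optimum on the merged instance. For $k=1$ this amounts to observing that the optimal social welfare on a single-seller instance equals the maximum of the ask valuation and the largest bid valuation, which is precisely the quantity tracked by the classical secretary mechanism. For general $k$ the ODA optimum equals the sum of the top-$k$ valuations among the $k+n^B$ merged items, because an efficient allocation simply hands the $k$ units of the commodity to whichever $k$ of the sellers and buyers value them most highly; this is exactly the top-$k$ benchmark of Kleinberg's mechanism. Once these benchmark identifications are in place both bullets follow at once, so the main obstacle is essentially bookkeeping: the real content has already been absorbed by the preservation theorems.
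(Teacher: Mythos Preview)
Your approach is essentially the paper's own: invoke the two preservation theorems and plug in a truthful random-order one-sided auction with the stated ratio, taking $f$ uniform so that the merged stream is a random permutation; for the general-$k$ bullet the paper likewise cites Kleinberg~\cite{Kleinberg_2005}, and your benchmark identification (ODA optimum $=$ top-$k$ of the merged stream) just makes explicit what the paper leaves implicit.

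The one correction concerns the $k=1$ citation. The paper obtains the $2\sqrt{e}$ ratio from the LP-based single-item auction of Buchbinder, Jain, and Singh~\cite{Buchbinder_2010}, not from Hajiaghayi, Kleinberg, and Parkes~\cite{Hajiaghayi_2004}. The constant $2\sqrt{e}$ is specifically the contribution of~\cite{Buchbinder_2010}; \cite{Hajiaghayi_2004} does provide a truthful secretary-style auction, but not with that particular ratio, so attributing $2\sqrt{e}$ to it is a factual slip even though your overall plug-in argument is otherwise correct and identical in structure to the paper's.
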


Corollary~\ref{cor1} follows the $2\sqrt{e}$-competitive online single-item auction proposed by Buchbinder \textit{et al.}~\cite{Buchbinder_2010} via linear programming and the $(1+\frac{C}{\sqrt{k}})$-competitive online multi-item auction introduced by Kleinberg~\cite{Kleinberg_2005}, which approaches to $1$-competitive as $k$ approaches to $\infty$. These two online one-sided auctions are based on secretary problems, i.e. traders arrive randomly and therefore $f(x)$ of $\mathcal{M_A}$ is an uniform random distribution function in these instances.

It is worth mentioning that the reduction approach is also applicable if we do not know how many buyers will arrive but that their arrival time satisfies some distribution. In that case, we will assign an arrival time for each seller in the reduction following that distribution.

\section{Conclusion}
\label{sect_con}
We have studied the mechanism design problem of online double auction markets where traders are dynamically arriving and departing the markets. Due to the complexity of the dynamics brought by traders, 
we showed that there is no deterministic and truthful online double auction that is competitive for maximising social welfare in an adversarial model. However, this impossibility does not apply to the situations where we can access certain prior information of the participants. In this paper, we studied two environments where sellers are relatively static and certain prior information of buyers is accessible. In the first environment, we assumed that the demand (i.e. the number of buyers) is not more than the supply (i.e. the number of sellers). Under this assumption, we proposed a deterministic yet $2$-competitive and truthful online mechanism in Section~\ref{sect_case1}. In the second environment, given the prior information that the number of incoming buyers is predictable, we demonstrated in Section~\ref{sect_case2} how to reduce a truthful online double auction to a truthful online one-sided auction, and showed that the competitiveness of the reduced online double auction follows that of the online one-sided auction. Especially, by using the reduction framework, 
we found an online double auction that is almost $1$-competitive. However, the mechanisms proposed in this paper are not (weakly) budget balanced, which is also an important factor besides truthfulness and efficiency and worth further investigation, though it is often very hard to achieve all three criteria together even in static cases~\cite{Myerson_1983,Gonen_2007}.
In addition, there are many other online exchanges are worth further investigation, e.g. electric vehicle charging \cite{GerdingSRZJ13} and kidney exchange~\cite{Utku_2010}.

One might suspect that the ``static" assumption made on sellers will limit the applicability of these mechanisms. We argue that they can be applied in more general settings where sellers can also arrive and depart at anytime. One way to apply these mechanisms is running multiple instances of them in sequence. In other words, we decompose an online market into multiple disjoint sub-markets where the conditions fit the assumptions made here. For example, in some exchange markets, both sellers and buyers come and leave randomly, but one side, e.g. sellers, stay longer than the other side. In that case, we can decompose the market into many sub-markets running for a period of, say, one-month, i.e. there will be 12 disjoint sub-markets for a one-year market. Each trader is allocated to one/many sub-markets on his/her arrival, and the decomposition should guarantee that each seller is able to fully participate in at least one sub-market. Under this decomposition, if the market is in a rising situation, then applying the proposed auctions in each sub-market will achieve the same truthfulness and competitiveness for the whole market (see the Appendix for more details). 

We have seen that different prior knowledge gives us different advantages for designing online mechanisms, as it reduces the dynamics in some extent. 
Especially, in very complex dynamic environments, without certain prior knowledge, in general it is impossible to get ideal mechanisms. 
Therefore, one objective of mechanism design in such complex environments is searching for desirable mechanisms by utilising as less prior knowledge as possible. 
Besides prior knowledge, randomisation has also played an important role in this paper and other online algorithm design~\cite{Karp:1990,Ben-David_1990,Chrobak:2008}.



\section*{Appendix:}

\subsection*{Proof of Theorem \ref{the:truth}}
\begin{proof}
We will prove the theorem for buyers and sellers respectively. 

\textit{\underline{For buyers:}}
Since the payment for matched buyers are non-decreasing over time because of the valuation increasing of the lowest unmatched ask, the earlier the arrival time a buyer has, the higher probability to be matched and the lower payment the buyer will get. Therefore, all buyers are incentivized to arrive at their true/earliest arrival time. Since the mechanism does not use buyer's departure time for decision-making, there is no motivation for buyers to misreport their departure time.

Regarding their valuation reporting, for a matched buyer $i$ with bid $\theta_i$, assume $m(\theta_i) = \theta_j$, i.e. $\theta_i$ is matched to $\theta_j$. $i$'s payment only depends on $v(\theta_j)$ and $v(\theta_j)$ is independent of $\theta_i$, so the payment of $i$ cannot be changed by $v(\theta_i)$. Moreover, increasing $v(\theta_i)$ does not change the probability for $\theta_i$ to be matched, while decreasing $v(\theta_i)$ will reduce the probability for $\theta_i$ to be matched. For an unmatched buyer $i$ with bid $\theta_i$, since $\theta_i$ cannot be matched to the currently best unmatched ask on the arrival of $\theta_i$ or there is no unmatched ask left, $i$ might be able to increase his valuation to get matched, but then he has to pay more than his valuation, i.e. $i$ gets negative utility. Thus, reporting valuation truthfully gives buyers the highest expected utility.

\textit{\underline{For sellers:}}
All sellers are incentivized to arrive and depart truthfully as they will not be considered if they arrive after the first buyer's arrival or depart before the last buyer's arrival.

For a matched seller $i$ with ask $\theta_i$, we will show that $i$ cannot report a different valuation other than her true valuation to improve her payment. Let $m$ be the matching given by Best-first Allocation and $\theta_j = m(\theta_i)$. Assume that $\theta_i$ and $\theta_j$ is the $i$-th matched pair in $m$ and $|m|=k$, i.e. the $k$-th matched pair of $m$ is $\theta_{last}^A$ and $\theta_{last}^B$. The following proof is given on the condition whether or not $\theta_{last}^B$ is reachable from $\theta_i$. 

\noindent(1) \textit{$\theta_{last}^B$ is reachable from $\theta_i$ (e.g. ask 2 in Figure~\ref{fig:bf4}):} 
\begin{itemize}
\item If $i$ reported $\theta_i^\prime$ instead of $\theta_i$ such that $v(\theta_i^\prime) > v(\theta_i)$ and \\$v(\theta_i^\prime) \leq \min(v(\bar{\theta}_{min}^A),\max(v(\theta_{last}^B),v(\bar{\theta}_{max}^B)))$, the ranking position of $\theta_i^\prime$ is $i^\prime \geq i$ and the allocation will give a new matching $m^\prime$. For all $i\leq l <i^\prime$, the $l$-th matched bid of $m$ will be matched to $(l+1)$-th matched ask of $m$ in $m^\prime$, $\theta_i^\prime$ will be matched to $i^\prime$-th matched bid of $m$ in $m^\prime$, and for all $1 \leq l <i$ and $i^\prime < l \leq k$, the $l$-th matched pair of $m$ is also a matched pair in $m^\prime$ (see Figure~\ref{fig:bf5} and \ref{fig:bf6} for example). In both $m$ and $m^\prime$, the payment for $i$ is the same because $\theta_{last}^B$ is still reachable from $\theta_i^\prime$, and $\bar{\theta}_{min}^A$, $\theta_{last}^B$ and $\bar{\theta}_{max}^B$ are not changed. Moreover, the probability for trader $i$ to be matched will be the same with both $\theta_i$ and $\theta_i^\prime$, which is $1$ here. However, if $v(\theta_i^\prime) > \min(v(\bar{\theta}_{min}^A), \max(v(\theta_{last}^B),v(\bar{\theta}_{max}^B)))$, then $\theta_i^\prime$ will not be matched in $m^\prime$ (see Figure~\ref{fig:bf9} for example). Therefore, $i$ cannot report a higher valuation to receive more payment.
\item If $i$ reported $\theta_i^\prime$ instead of $\theta_i$ such that $v(\theta_i^\prime) < v(\theta_i)$, we know that $\theta_i^\prime$ will be matched. There will be two situations: 1) $\theta_{last}^B$ is still reachable from $\theta_i^\prime$, 2) $\theta_{last}^B$ is not reachable from $\theta_i^\prime$. In the first situation, $\bar{\theta}_{min}^A$, $\theta_{last}^B$ and $\bar{\theta}_{max}^B$ of $m$ and $m^\prime$ are the same, so the payment will be the same for $\theta_i^\prime$ and $\theta_i$. In the second situation, we will have two sub-cases: a) $\bar{\theta}_{min}^A$ of $m$ is $\theta_{last}^A$ of $m^\prime$ and $\bar{\theta}_{max}^B$, $\theta_{last}^B$ are the same for both $m$ and $m^\prime$ (see the manipulation example in Figure \ref{fig:bf8} and \ref{fig:bf1_2} in another way around, i.e. ask of $4.5$ is misreported as ask of $2$), b) $\theta_{last}^B$ of $m$ is $\bar{\theta}_{max}^B$ of $m^\prime$ and $\bar{\theta}_{min}^A$ is the same for both $m$ and $m^\prime$ (see the manipulation example in Figure \ref{fig:bf10} and \ref{fig:bf7} (or Figure \ref{fig:bf3} and \ref{fig:bf7}) in another way around). Following the proof for the condition ``\textit{$\theta_{last}^B$ is not reachable from $\theta_i$}'' in the following, we conclude that $i$ cannot improve her utility by misreporting a lower valuation.
\end{itemize} 

\noindent(2) \textit{$\theta_{last}^B$ is not reachable from $\theta_i$ (e.g. ask 2 in Figure~\ref{fig:bf1_2}/\ref{fig:bf7}):}
\begin{itemize}
\item If $i$ reported $\theta_i^\prime$ instead of $\theta_i$ such that $v(\theta_i^\prime) > v(\theta_i)$ and \\$v(\theta_i^\prime) \leq \max(v(\theta_{last}^A),v(\bar{\theta}_{max}^B))$, we will get a new matching $m^\prime$. If $\theta_{last}^B$ is still not reachable from $\theta_i^\prime$ in $m^\prime$, then the payment for $\theta_i^\prime$ is the same as for $\theta_i$ (see Figure~\ref{fig:bf2} for example). If $\theta_{last}^B$ of $m$ is reachable from $\theta_i^\prime$ in $m^\prime$ and it is also the last matched bid of $m^\prime$ (i.e. $v(\theta_{last}^A) > v(\bar{\theta}_{max}^B)$), then $\theta_{last}^A$ of $m$ is $\bar{\theta}_{min}^A$ of $m^\prime$ and therefore the payment for $\theta_i^\prime$ will be the same as for $\theta_i$ (e.g. Figure~\ref{fig:bf1_2} and \ref{fig:bf8}). 
If $v(\theta_{last}^A) \leq v(\bar{\theta}_{max}^B)$ and $\theta_{last}^B$ of $m^\prime$ is reachable from $\theta_i^\prime$ in $m^\prime$, then $\theta_{last}^B$ of $m^\prime$ will be $\bar{\theta}_{max}^B$ of $m$ and $\theta_{last}^A$ of $m^\prime$ is either $\theta_{last}^A$ of $m$ or $\theta_i^\prime$ (see Figure~\ref{fig:bf7}, \ref{fig:bf10} and \ref{fig:bf3} for example). It is easy to check that the payment in this case is also not improved. However, if $v(\theta_i^\prime) > \max(v(\theta_{last}^A),v(\bar{\theta}_{max}^B))$, then $\theta_i^\prime$ will not be matched in $m^\prime$. Therefore, $i$ cannot improve her payment by reporting a higher valuation.
\item If $i$ reported $\theta_i^\prime$ instead of $\theta_i$ such that $v(\theta_i^\prime) < v(\theta_i)$, the ranking position of $\theta_i^\prime$ might be lower than that of $\theta_i$, but it will not change the probability for $i$ to be matched, $\theta_{last}^A$ and $\bar{\theta}_{max}^B$ are still the same, and $\theta_{last}^B$ is still not reachable from $\theta_i^\prime$. Thus, the payment will be the same for $i$ with both reports $\theta_i$ and $\theta_i^\prime$.
\end{itemize}

For an unmatched seller $i$ with ask $\theta_i$, we know that $v(\theta_{last}^A) \leq v(\theta_i) > v(\bar{\theta}_{max}^B)$. If $i$ reported $\theta_i^\prime$ such that $v(\theta_i^\prime) < v(\theta_i)$ and $\theta_i^\prime$ is matched in the new matching $m^\prime$, then there will be three cases: (a) $\theta_{last}^A$ and $\theta_{last}^B$ of $m$ are also those of $m^\prime$, (b) $\theta_{last}^A$ of $m$ is $\bar{\theta}_{min}^A$ of $m^\prime$ and $\theta_{last}^B$ of $m$ is $\theta_{last}^B$ of $m^\prime$, (c) $\bar{\theta}_{max}^B$ of $m$ is $\theta_{last}^B$ of $m^\prime$ and either $\theta_i^\prime$ or $\theta_{last}^A$ of $m$ is $\theta_{last}^A$ of $m^\prime$. For any of these three cases, the payment for $i$ with report $\theta_i^\prime$ will be less than or equal to $v(\theta_i)$, i.e. $i$ gets non-positive utility by misreporting.
\qed
\end{proof}

\begin{figure}[ht]
   \centering
   \begin{minipage}[b]{.24\linewidth}
   \centering
   \includegraphics[height=0.135\textheight]{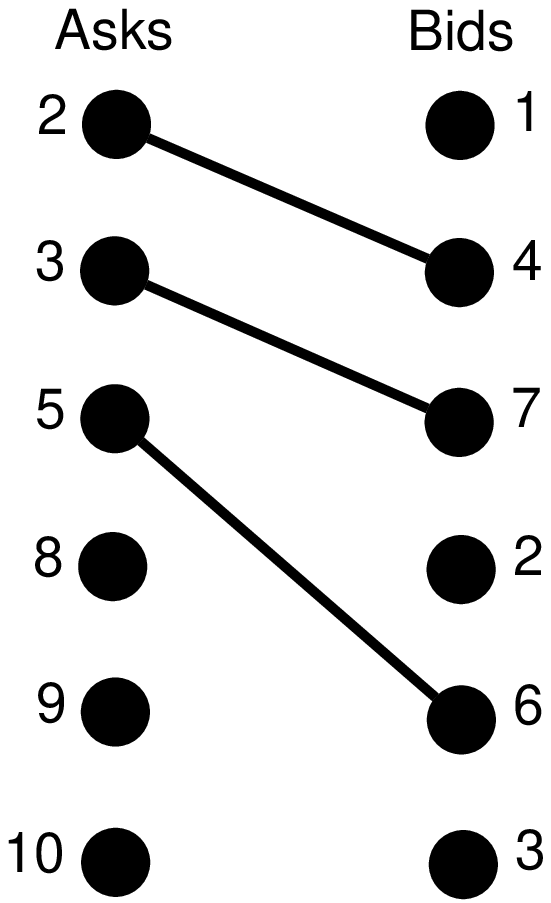}
   \subcaption{$m$}\label{fig:bf4}
   \end{minipage}    
   \begin{minipage}[b]{.24\linewidth}
   \centering
   \includegraphics[height=0.135\textheight]{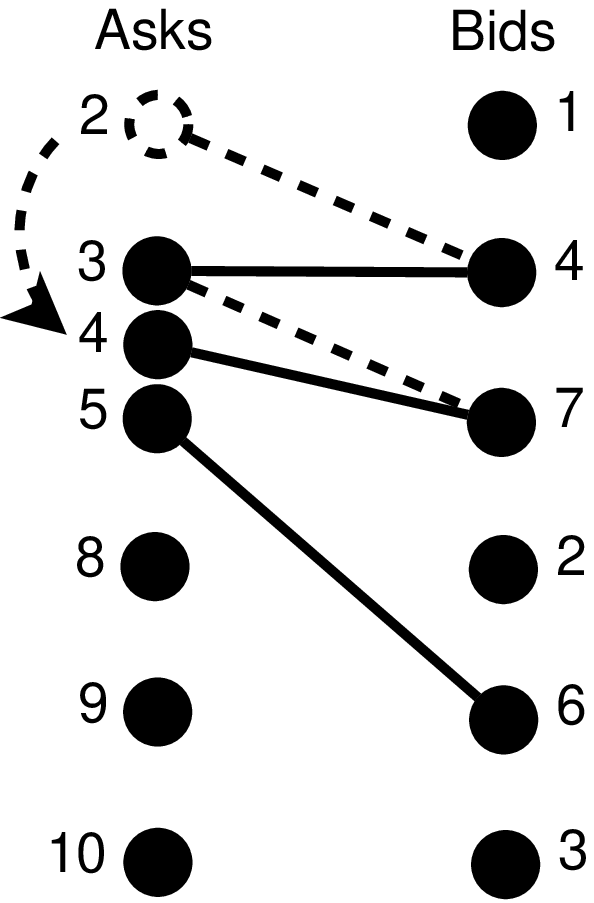}
   \subcaption{$m^\prime$}\label{fig:bf5}
   \end{minipage}
   \begin{minipage}[b]{.24\linewidth}
   \centering
   \includegraphics[height=0.135\textheight]{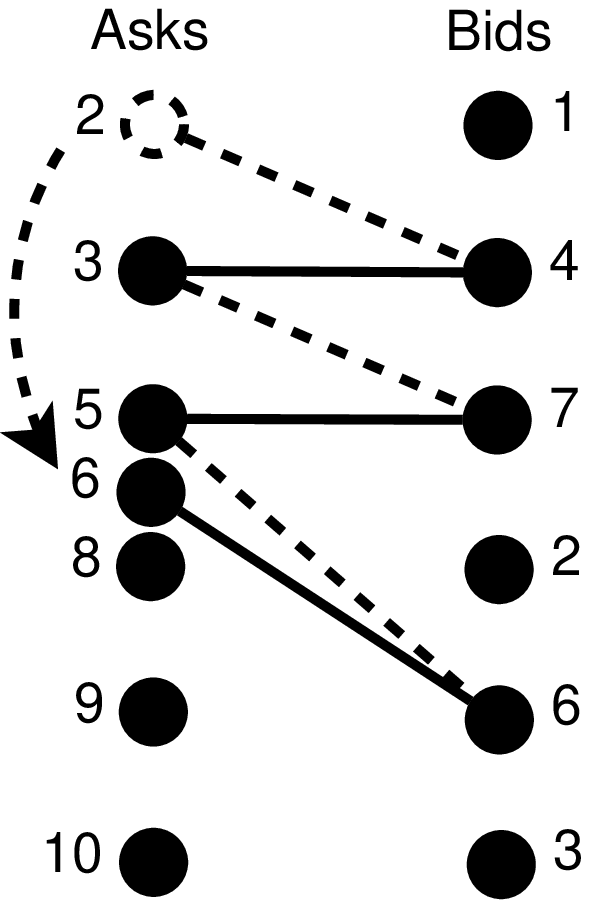}
   \subcaption{$m^{\prime\prime}$}\label{fig:bf6}
   \end{minipage}
   \begin{minipage}[b]{.24\linewidth}
   \centering
   \includegraphics[height=0.135\textheight]{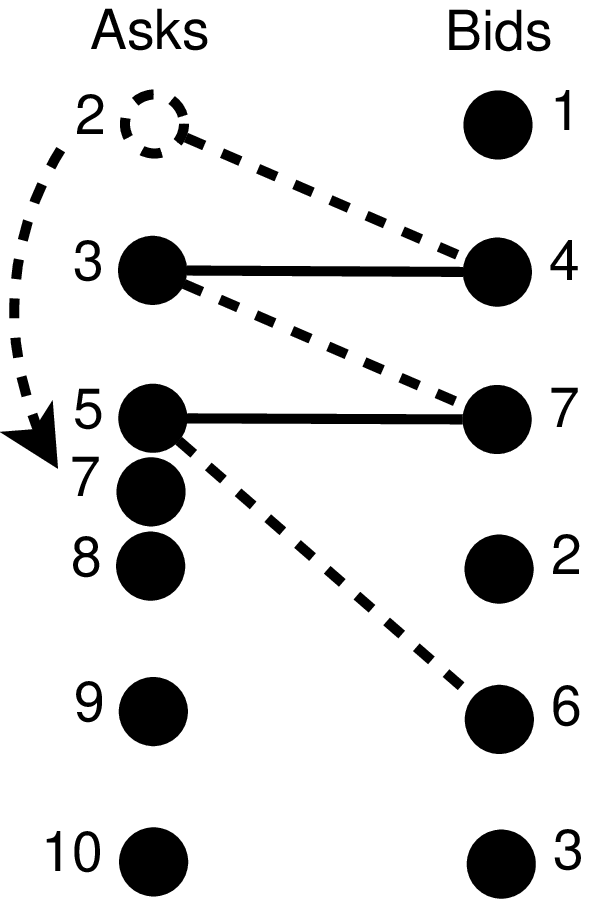}
   \subcaption{$m^{\prime\prime\prime}$}\label{fig:bf9}
   \end{minipage}
   \caption{Seller Manipulation Examples I}
   \label{fig:manipulation1}
\end{figure}

\begin{figure}[ht]
   \centering
   \begin{minipage}[b]{.24\linewidth}
   \centering
   \includegraphics[height=0.135\textheight]{./bf1}
   \subcaption{$m_1$}\label{fig:bf1_2}
   \end{minipage}
   \begin{minipage}[b]{.24\linewidth}
   \centering
   \includegraphics[height=0.135\textheight]{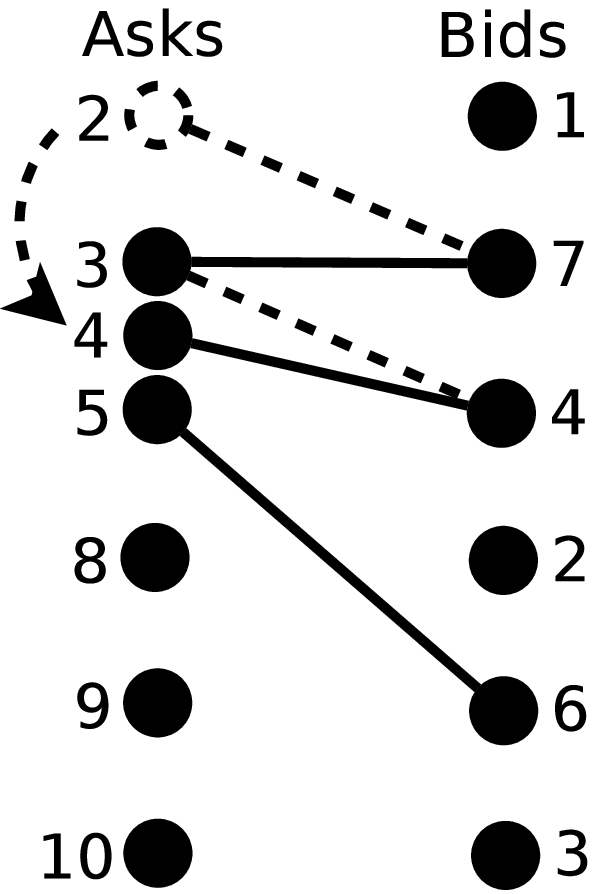}
   \subcaption{$m_1^\prime$}\label{fig:bf2}
   \end{minipage}
   \begin{minipage}[b]{.24\linewidth}
   \centering
   \includegraphics[height=0.135\textheight]{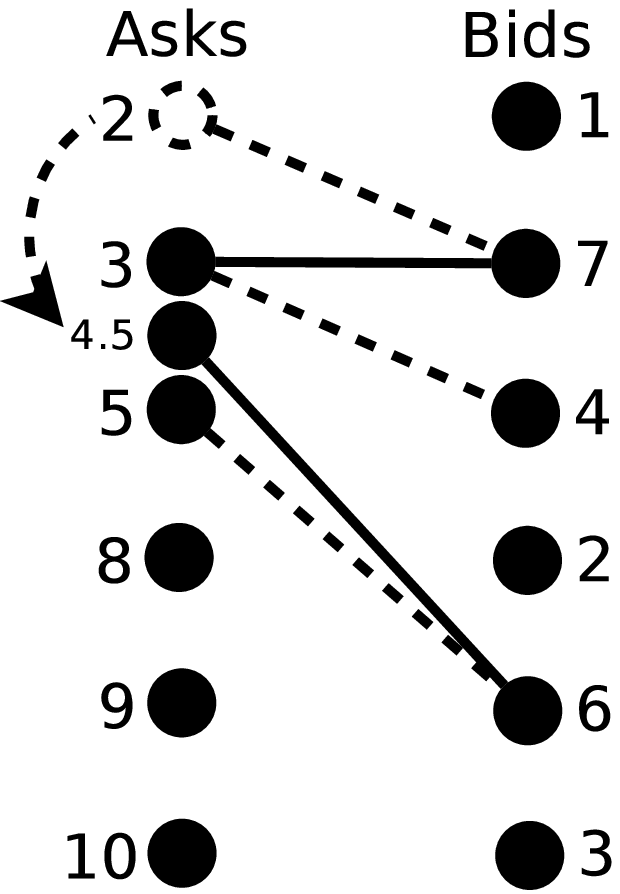}
   \subcaption{$m_1^{\prime\prime}$}\label{fig:bf8}
   \end{minipage}
\\
   \begin{minipage}[b]{.24\linewidth}
   \centering
   \includegraphics[height=0.135\textheight]{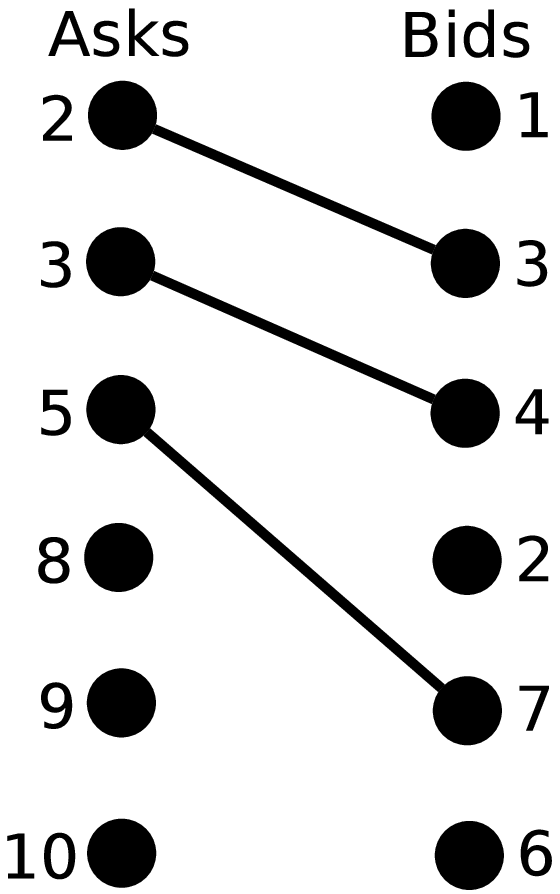}
   \subcaption{$m_2$}\label{fig:bf7}
   \end{minipage}
   \begin{minipage}[b]{.24\linewidth}
   \centering
   \includegraphics[height=0.135\textheight]{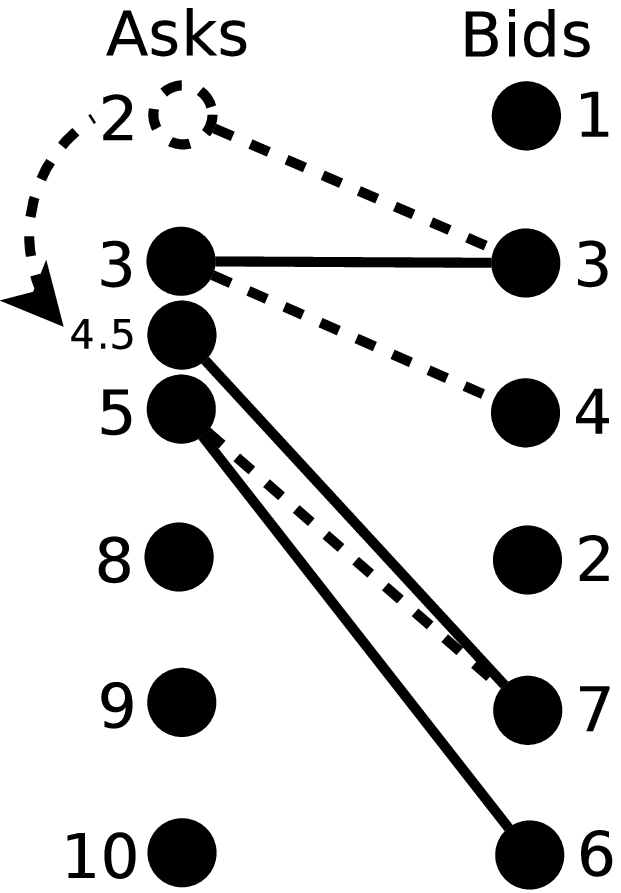}
   \subcaption{$m_2^\prime$}\label{fig:bf10}
   \end{minipage}
   \begin{minipage}[b]{.24\linewidth}
   \centering
   \includegraphics[height=0.135\textheight]{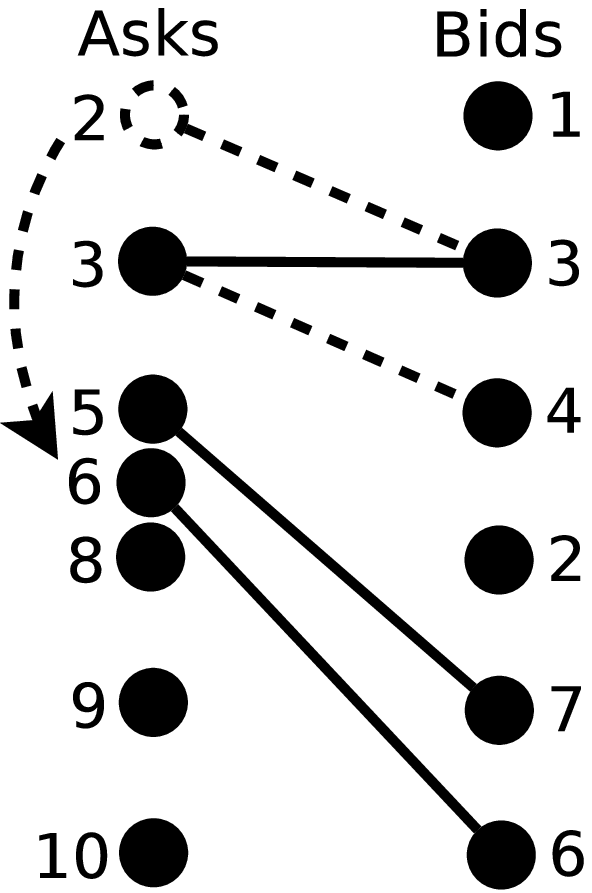}
   \subcaption{$m_2^{\prime\prime}$}\label{fig:bf3}
   \end{minipage}
   \caption{Seller Manipulation Examples II}
   \label{fig:manipulation2}
\end{figure}

\subsection*{An Extension to General Settings by Decomposing the Market}

We have assumed that sellers are patient for the proposed mechanisms $\mathcal{M}_{greedy}$ and $\mathcal{M_A}$. We demonstrate in the following how to apply them in settings without this assumption. We tackle the environment where both sellers and buyers arrive and depart randomly and one side's active time is relatively longer than the other side. Without loose of generality, we assume that sellers stay longer in the market, and each seller is active in the market for at least a period of length $t$. Assume that the whole market runs for a length of time $T$. The following mechanism, called $\mathcal{E_M}$, decomposes the whole online market into multiple sub-markets where each sub-market runs a length of time $\frac{t}{2}$.

\begin{framed}
\noindent\textbf{$\mathcal{E_M}$ of $\mathcal{M}$}\\
\rule{\textwidth}{0.5pt}
 \begin{enumerate}
  \item Split the market into $\lceil\frac{2T}{t}\rceil$ sub-markets where each sub-market $k \in \{1,2,...,\lceil\frac{2T}{t}\rceil\}$ runs in the period of $[(k-1)\frac{t}{2}, \min(k\frac{t}{2},T)]$.
  \item On the arrival of seller $i$, allocate $i$ to the latest sub-market where she is active during the whole running period of that sub-market.
  \item On the arrival of buyer $j$, allocate $j$ to all sub-markets where he can active until he is matched.
  \item Apply $\mathcal{M}$ in each sub-market.
\end{enumerate}
\end{framed}
Note that, on the arrival of a seller, she can only be allocated to a sub-market where she is active over the whole running period of that sub-market, in order to apply the proposed mechanisms. That is, she cannot be allocated to a sub-market where she arrives/departs during the sub-market is running. Also the decomposition guarantees that each seller is able to fully active in at least one sub-market.

Depending on the market situation, we will choose either $\mathcal{M}_{greedy}$ or $\mathcal{M_A}$ to instantiate the above extension. If we know that the demand is not more the supply in each sub-market, we can choose $\mathcal{M}_{greedy}$ to get the extension $\mathcal{E}_{\mathcal{M}_{greedy}}$. If the number of incoming buyers is always predictable, we can apply the reduction $\mathcal{M_A}$ in the extension. 

One interesting result we get from this extension is that when the market is rising, we get the same truthfulness and competitiveness for $\mathcal{E_M}$ as that of $\mathcal{M}$. We say a market is rising if the transaction prices are increasing, and in a good economic situation, most markets are rising, e.g. real estate.

\begin{theorem}
Under a rising market situation, if $\mathcal{M}$ is truthful and $c$-competitive, then $\mathcal{E_M}$ is truthful and $c$-competitive.
\end{theorem}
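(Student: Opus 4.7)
The plan is to prove truthfulness and competitiveness separately, in each case reducing to the corresponding property of $\mathcal{M}$ applied within a single sub-market and using the rising-market hypothesis to bridge across sub-markets. Since each sub-market runs $\mathcal{M}$ in isolation, the new content lies entirely in the inter-sub-market allocation rules: who gets routed where, and how much welfare the routing can lose.

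For truthfulness, I handle sellers first. A seller is assigned to the latest sub-market whose window $[(k-1)t/2, kt/2]$ lies inside her reported active interval. Reporting a later arrival cannot change that assignment because the ``latest'' sub-market depends only on her reported departure; reporting an earlier departure can only shift her into a strictly earlier sub-market whose receipts, under a rising market, are weakly lower, hence weakly harmful. Inside her assigned sub-market truthfulness of $\mathcal{M}$ rules out valuation misreports. For buyers, $\mathcal{E_M}$ routes the buyer through the consecutive sub-markets he is active in until he is first matched. Reporting a later arrival strips away initial sub-markets where rising-market prices are weakly cheaper, which can only hurt him; reporting an earlier departure merely erases later fall-back sub-markets, and individual rationality of $\mathcal{M}$ makes those lost opportunities non-negative in utility. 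Because the buyer's single valuation report is fed into $\mathcal{M}$ in every sub-market he enters and $\mathcal{M}$ is truthful, round-by-round truthfulness of the inner mechanism yields truthful valuation reporting overall.

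For competitiveness, let $\sigma_k$ denote the traders $\mathcal{E_M}$ places into sub-market $k$ (including any buyer still unmatched when sub-market $k$ opens) and let $Opt_k$ be the offline optimum restricted to $\sigma_k$. Because $\mathcal{M}$ is $c$-competitive within every sub-market,
\begin{equation*}
W(\mathcal{E_M}(\theta)) \;\geq\; \tfrac{1}{c}\sum_{k} W(Opt_k),
\end{equation*}
so it suffices to show $\sum_{k} W(Opt_k) \geq W(Opt(\theta))$. The plan is to take the offline optimal matching $M^{*}$ on $\theta$ and transform it, via a chronological exchange argument, into a union of per-sub-market matchings of weakly larger total welfare. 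Any cross-sub-market pair in $M^{*}$ matches an earlier-arriving trader with a later-arriving one; in a rising market valuations are weakly monotone in arrival time on each side of the market, so the cross-sub-market pair can be swapped for two within-sub-market pairs whose retained-seller-plus-sold-to-buyer contributions weakly dominate the original cross-sub-market contribution.

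The main obstacle is making this exchange argument both feasibility-preserving and tight. When a cross-sub-market match $(i,j)$ is broken, the within-sub-market replacement partners used for $i$ and for $j$ must actually be available in their respective sub-markets of $\sigma_{k_i}$ and $\sigma_{k_j}$ under $\mathcal{E_M}$'s routing, and the rising-market monotonicity must be invoked in the right direction on each side. I expect to handle this by processing sub-markets in chronological order, maintaining the invariant that every trader in a processed sub-market appears in at most one retained match and that any trader displaced by an exchange is absorbed later under a again-dominating rising-market inequality. Once this exchange lemma delivers $\sum_k W(Opt_k) \geq W(Opt(\theta))$, combining with the per-sub-market $c$-competitiveness of $\mathcal{M}$ gives $W(\mathcal{E_M}(\theta)) \geq W(Opt(\theta))/c$, completing the argument.
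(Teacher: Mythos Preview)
Your approach matches the paper's proof sketch: truthfulness is split into within-sub-market behavior (delegated to the truthfulness of $\mathcal{M}$) and across-sub-market behavior (rising-market monotonicity makes $\mathcal{E_M}$'s seller-delay/buyer-early routing individually optimal for each trader), and competitiveness follows because that same routing is welfare-aligned under a rising market. The paper's argument is explicitly only a sketch---a single sentence for efficiency---so your per-sub-market decomposition $W(\mathcal{E_M}(\theta))\ge \tfrac{1}{c}\sum_k W(Opt_k)$ together with the exchange lemma $\sum_k W(Opt_k)\ge W(Opt(\theta))$ is the natural way to make that sketch rigorous rather than a different route; the feasibility obstacle you flag (that the $\sigma_k$ depend on $\mathcal{E_M}$'s earlier matchings) is precisely the point the paper glosses over, so on that front you are already more careful than the original.
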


\begin{proof}[proof sketch]
For truthfulness, we know that once a trader is allocated to one sub-market, there is no beneficial manipulate in the sub-market because of the truthfulness of $\mathcal{M}$. Since the market is a rising market, so it is better for a seller to sell her item as late as possible and for a buyer to buy the item as early as possible, which is exactly what $\mathcal{E_M}$ does.

For efficiency, earlier arrival sellers with relatively lower valuations have been delayed for exchange as much as possible by $\mathcal{E_M}$ so that they can be matched to buyers with relatively higher valuations. \qed
\end{proof}

It is worth mentioning that the above mechanism actually reflects one intuitive reasoning we have for trading in a rising market situation. More specifically, in a rising market situation, a seller should sell her item as late as possible and a buyer should buy the item as early as possible in order to gain higher profit. 

Similarly, we can modify the above extension to fit for a falling/stable market situation. Moreover, how this extension can be further generalized for other market situations is worth further investigation.

\end{document}